\newcommand{\codeplus}[3]{%
  \lstinputlisting[%
        #3,
        linerange={#2},
        rangebeginprefix=(*\ begin\ ,%
        rangebeginsuffix=\ *),%
        rangeendprefix=(*\ end\ ,%
        rangeendsuffix=\ *),
        includerangemarker=false]{#1}}
\newcommand\zcdc[1]{\let\par\endgraf\cdc{#1}}
\newcommand\eqRule[2]{\hbox{\zcdc{#1}}\ =\ \hbox{\zcdc{#2}}}
\newcommand\congRule[2]{\hbox{\zcdc{#1}}\ \cong\ \hbox{\zcdc{#2}}}
\newcommand\subseteqRule[2]{\hbox{\zcdc{#1}}\ \subseteq\ \hbox{\zcdc{#2}}}
\newcommand\den[1]{\llbracket#1\rrbracket}
\newcommand\cd[1]{\lstinline[breakatwhitespace]{#1}}
\newcommand\cdh[1]{\lstinline[language=Haskell,breakatwhitespace]{#1}}
\newcommand\cdc[1]{\lstinline[language=Coq,breakatwhitespace]{#1}}
\newcommand\hstocoq{\texttt{hs-to-coq}\xspace}
\newcommand\logic{Tl\"on embedding}
\newcommand\Logics{Tl\"on Embeddings}
\newcommand\logics{Tl\"on embeddings}
\newcommand\etc{\textit{etc.}}
\newcommand\ie{\textit{i.e.,\ }}
\newcommand\eg{\textit{e.g.,\ }}
\newcommand{\struct}{program adverb}
\newcommand{\structs}{program adverbs}
\newcommand{\structtype}{adverb data type}
\newcommand{\structtypes}{adverb data types}
\newcommand{\lan}{$\mathcal{B}$}
\newcommand{\socketnet}{\textsc{NetImp}}
\newcommand{\socketnetspec}{\textsc{NetSpec}}
\begin{abstract}
  Free monads (and their variants) have become a popular general-purpose tool
  for representing the semantics of effectful programs in proof
  assistants. These data structures support the compositional definition of
  semantics parameterized by uninterpreted events, while admitting a rich
  equational theory of equivalence. But monads are not the only way to
  structure effectful computation, why should we limit ourselves?

  In this paper, inspired by applicative functors, selective functors, and other
  structures, we define a collection of data structures and theories, which we
  call \emph{program adverbs}, that capture a variety of computational
  patterns. Program adverbs are themselves composable, allowing them to be used
  to specify the semantics of languages with multiple computation patterns.  We
  use program adverbs as the basis for a new class of semantic embeddings
  called \emph{\logics}.  Compared with embeddings based on free
  monads, \logics{} allow more flexibility in computational modeling of effects,
  while retaining more information about the program's syntactic structure.
\end{abstract}
\keywords{formal verification, mechanized reasoning, embedding}
\begin{document}

\title{Program Adverbs and \Logics}

\newif\ifextended
\extendedtrue

\author{Yao Li}
\email{hnkfliyao@gmail.com}
\orcid{0000-0001-8720-883X}
\affiliation{%
  \department{Computer and Information Science}
  \institution{University of Pennsylvania}
  \streetaddress{3330 Walnut St}
  \city{Philadelphia}
  \state{PA}
  \postcode{19104}
  \country{USA}
}

\author{Stephanie Weirich}
\email{sweirich@cis.upenn.edu}
\orcid{0000-0002-6756-9168}
\affiliation{%
  \department{Computer and Information Science}
  \institution{University of Pennsylvania}
  \streetaddress{3330 Walnut St}
  \city{Philadelphia}
  \state{PA}
  \postcode{19104}
  \country{USA}
}

\maketitle

\bibliographystyle{ACM-Reference-Format}
\citestyle{acmauthoryear}   

\section{Introduction}\label{sec:intro}

Suppose that you want to formally verify a program written in your favorite
language---be it Verilog, Haskell, or C---your first step would be to translate
that program and a description of its semantics to a formal reasoning system,
such as Coq~\citep{coq}. This step is known as \emph{semantic
embedding}~\citep{embedding}.

There are multiple approaches to semantic embeddings. The two most well-known
were proposed by \citet{embedding}: \emph{shallow embeddings}, which represent
terms of the embedded language using equivalent terms of the embedding language,
and \emph{deep embeddings}, which represent terms using abstract syntax
trees~(ASTs) and represent their semantics via some interpretation function.

Shallow embeddings are convenient because they are simple, but they have their
limitations. It is impossible to use them to state and reason about properties
related to syntax, because they do not retain the syntactic structure of the
original program. Furthermore, shallow embeddings fix a single semantics, so
they are less robust to changes in program interpretations. Such edits require
changing the translation process, in addition to the semantic domain~(\ie the
type used for representing the semantics of the embedded language).

On the other hand, deep embeddings are more modular thanks to an extra layer
(\ie the AST) that defines the syntax of the embedded language. When we need to
change the semantics, we only need to change the \emph{interpretation} that maps
the AST into some semantic domain---the translation to the formal reasoning
system remains unchanged. Furthermore, the AST makes it possible to state and
prove properties related to the original program's syntactic structure. The
downside is that interpreting and reasoning about properties based on this AST
takes more effort than with shallow embeddings.

The pros and cons make it hard to choose between shallow and deep
embeddings. Fortunately, we don't need to commit to a single option. We can
use \emph{mixed embeddings}, a style of embedding that includes characteristics
of each. In this style, parts of a language are embedded ``shallowly'' while
other parts are embedded ``deeply''. However, in any mixed embedding, we must
ask: where should we draw the line to separate the shallowly embedded part from
the deeply embedded part?

Recent efforts have focused on mixed embeddings based on freer
monads~\citep{freer} or their variants~\citep{itree, mcbride-free, one-monad,
steelcore, resumption, delay}. The style has been shown useful for representing
and reasoning about effectful computation in various
applications~\citep{freespec2, reaching-star, vellvm-itree, itree-kv-server,
itree-hol, verify-effectful-haskell-in-coq, adam-binder}. Beyond these
applications, this style points out a useful guideline for answering the
question above.  That is: modeling the pure parts of the computation
``shallowly'' and the effectful parts ``deeply''.

Our work builds on this idea of separating pure and effectful parts in a mixed
embedding, but inspects the following question: Why freer monads? We find that
this is because freer monads model \emph{one} general computation pattern that
is common in many languages. However, the finding also implies that there are
other computation patterns not captured by freer monads.

Following this observation, we propose a new class of mixed embeddings
called \emph{\logics}.\footnote{The name \logic{} is a reference to the short
story \emph{Tl\"on, {U}qbar, {O}rbis {T}ertius} by Jorge Luis Borges. In the
short story, Tl\"on is an imaginary world, where its parent language does not
have any nouns, but only ``impersonal verbs, modified by monosyllabic suffixes
(or prefixes) with an adverbial value''~\citep{tlon}.}  \logics{} model programs
using structures called \emph{\structs}, which are reifications of familiar type
classes (\eg~\cdc{Applicative}, \cdc{Selective}, \cdc{Monad}) paired with
equational theories. Like freer monads, these free structures can be used to
combine shallowly embedded pure computation with deeply embedded computational
effects.  However, \structs{} provide choices in the semantics through the
selection of the structure and equational theory. For example, the
``statically'' adverb, based on applicative functors and their free theory,
models computation where control flow and data flow in the semantics are
fixed. Or, by modifying the equational theory of the free applicative structure
to include commutativity, we can describe computation that is ``statically and
in parallel''.

We make the following contributions:
\begin{itemize}
\item We compare the trade-offs of different styles of semantic
      embeddings in the context of formal reasoning and
      propose \logics~(\Cref{sec:motivation}).
\item We define \structs{} and show how to define their syntactic parts and
      their semantic parts~(\Cref{sec:statically}).
\item We refactor \structs{} to support composition and extension. We
      motivate why we want to compose \structs{} and define a composition
      algebra~(\Cref{sec:combine-structs}).
\item We implement composable \structs{} using the Coq proof
      assistant. A major challenge for implementing them in Coq is supporting
      extensible inductive data types~\citep{expression}. We show one way of
      addressing this challenge by adapting the \emph{Meta Theory \`{a} la
      Carte}~(MTC) approach~\citep{mtc}~(\Cref{sec:combine-structs}).
\item We identify five basic \structs{} from commonly used type classes in
      Haskell and we prove that these \structs{} are
      sound~(\Cref{sec:statically}). We also identify two add-on \structs{} that
      are used in combination with basic \structs~(\Cref{sec:combine-structs}).
\item We demonstrate the usefulness of \structs{} via three distinct language
      examples including a simple circuit language~(\Cref{sec:motivation}),
      Haxl~\citep{haxl}, and a networked server adapted
      from \citet{itree-server}~(\Cref{sec:case-study}).
\end{itemize}
Additionally, we discuss the choice of \structtypes{} we use and alternative
approaches to implement composable \structs{} in \cref{sec:discussion} and the
related work in \cref{sec:related-work}. We provide the Coq formalization of all
the key concepts, theorems, and examples shown in this paper in our
supplementary artifact~\citep{tlon-artifact}.

\section{Semantic Embeddings}\label{sec:motivation}\label{subsec:example}

In this section, we first demonstrate different forms of semantic embeddings
using a simple circuit language called \lan{} and compare how each form of
embedding can be used to reason about programs written in this language. To
distinguish the embedded language and the embedding language, we use
mathematical notation to describe \lan{} and use Coq code to describe its
embeddings.

The syntax of \lan{} appears in \cref{fig:lanb-syntax}. Semantically,
we want the Boolean operators to have their usual semantics. However,
\lan{} can read from the variables that represent references to
external devices and we don't want to fix those values in the semantics.
Furthermore, we don't know if the values are immutable: they might change over
time, or they might change after each read, \etc\@

\begin{figure}[t]
\begin{align*}
\emph{literals}\qquad b   &\ ::=\ \hbox{\zcdc{true}} \;|\; \hbox{\zcdc{false}} \\
\emph{terms}\qquad    t,u &\ ::=\ x \;|\; b \;|\; \neg t \;|\; t \wedge u \;|\; t \vee u
\end{align*}
\caption{The syntax of \lan.}\label{fig:lanb-syntax}
\end{figure}

The four embeddings that we consider in this section are defined in the right
column of \cref{fig:shallow-translation}. We use $\den{\cdot}_S$,
$\den{\cdot}_D$, $\den{\cdot}_M$, and $\den{\cdot}_A$ to represent the
translation from a term of \lan{} to shallow, deep, and two mixed embeddings,
respectively. These translations refer to the definitions in the left column as
well as to the standard classes and notations for functors, monads, \etc, shown
in \cref{fig:functor-and-ap}.

To compare embeddings, we will use each to consider the following questions
regarding the semantics of \lan{}:
\begin{enumerate}
\item Is $x$ equivalent to $x \wedge x$?
\item Is $x$ equivalent to $x \wedge \text{true}$?
\item Is $t \wedge u$ equivalent to $u \wedge t$?
\item Is the number of variable accesses always less than or equal to 2 to the
      power of the circuit's depth?
\end{enumerate}

Because we are modeling a circuit language that uses unknown external devices,
we don't want to be able to prove or disprove property (1). This property may
hold or not hold depending on the situation. If the external devices are
immutable, this property will be true. Otherwise, we may be able to falsify it.
In contrast, we would like our semantic embedding to give us tools to verify
properties (2) and (3) because these properties should hold regardless of the
properties of our external device. The former holds because on both sides of the
equivalence relation we have only accessed the variable $x$ once. The latter is
due to circuits run $\wedge$ in parallel---whatever result appears in $t \wedge
u$ can also appear in $u \wedge t$ and vice versa, regardless of what effects
could be involved in $t$ or $u$. The last property (4) relates a dynamic
property of the semantics (the number of variable accesses) to a syntactic
property of the circuit (the size of the circuit itself).

\begin{figure}
\begin{tabular}{ll}
\fbox{Shallow Embedding}\\
\begin{minipage}{0.5\linewidth}
\codeplus{ProgramAdverbs/Examples/Section2.v}{reader}{language=Coq,resetmargins=true}
\end{minipage}
&
\begin{minipage}{0.4\linewidth}
\[ \begin{array}{r@{\;}l}
\den{\hbox{\ensuremath{\cdot}}}_S &: \text{\cdc{Reader bool}}
\\ \den{\hbox{\cdc{true}}}_S &= \text{\cdc{ret true}}
\\ \den{\hbox{\cdc{false}}}_S &= \text{\cdc{ret false}}
\\ \den{x}_S &= \text{\cdc{ask x}}
\\ \den{\neg t}_S &= \text{\cdc{negb <\$>} } \den{t}_S
\\ \den{t \wedge u}_S &= \text{\cdc{t'<-} $\den{t}_S$\cdc{; u'<-} $\den{u}_S$\cdc{;}}
\\ &\quad \text{\cdc{ret (andb t' u')}}
\\ \den{t \vee u}_S &= \text{\cdc{t'<-} $\den{t}_S$\cdc{; u'<-} $\den{u}_S$\cdc{;}}
\\ &\quad \text{\cdc{ret (orb  t' u')}}
\end{array}\]
\end{minipage}
\\
\\
\begin{minipage}{0.45\linewidth}
\fbox{Deep Embedding}
\codeplus{ProgramAdverbs/Examples/Section2.v}{deep_embedding}{language=Coq,resetmargins=true}
\end{minipage}
&
\begin{minipage}{0.4\linewidth}
\[ \begin{array}{r@{\;}l}
\den{\hbox{\ensuremath{\cdot}}}_D &: \text{\cdc{term}}
\\  \den{\hbox{\cdc{true}}}_D &= \text{\cdc{Lit true}}
\\ \den{\hbox{\cdc{false}}}_D &= \text{\cdc{Lit false}}
\\ \den{x}_D &= \text{\cdc{Var x}}
\\ \den{\neg t}_D &= \text{\cdc{Neg} } \den{t}_D
\\ \den{t \wedge u}_D &= \text{\cdc{And} } \den{t}_D \text{ } \den{u}_D
\\ \den{t \vee u}_D &= \text{\cdc{Or} } \den{t}_D \text{ } \den{u}_D
\end{array}\]
\end{minipage}
\\
\\
\begin{minipage}{0.55\linewidth}
\fbox{Freer Monad Embedding}
\codeplus{ProgramAdverbs/Examples/Section2.v}{Freer}{language=Coq,resetmargins=true}
\end{minipage}
&
\begin{minipage}{0.4\linewidth}
\[ \begin{array}{r@{\;}l}
\den{\hbox{\ensuremath{\cdot}}}_M &: \text{\cdc{FreerMonad DataEff bool}}
\\ \den{\hbox{\cdc{true}}}_M &= \text{\cdc{Ret true}}
\\ \den{\hbox{\cdc{false}}}_M &= \text{\cdc{Ret false}}
\\ \den{x}_M &= \text{\cdc{Bind (GetData x) Ret}}
\\ \den{\neg t}_M &= \text{\cdc{negb <\$>} } \den{t}_M
\\ \den{t \wedge u}_M &= \text{\cdc{t'<-} $\den{t}_M$\cdc{; u'<-} $\den{u}_M$;}
\\ & \quad \text{\cdc{Ret (andb t' u')}}
\\ \den{t \vee u}_M &= \text{\cdc{t'<-} $\den{t}_M$\cdc{; u'<-} $\den{u}_M$\cdc{;}}
\\ & \quad \text{\cdc{Ret (orb  t' u')}}
\end{array}
\]
\end{minipage}
\\
\\
\begin{minipage}{0.55\linewidth}
\fbox{Reified Applicative Embedding}
\codeplus{ProgramAdverbs/Examples/Section2.v}{applicative}{language=Coq,resetmargins=true}
\end{minipage}
&
\begin{minipage}{0.4\linewidth}
\[ \begin{array}{r@{\;}l}
\den{\hbox{\ensuremath{\cdot}}}_A &: \text{\cdc{ReifiedApp DataEff bool}}
\\ \den{\hbox{\cdc{true}}}_A &= \text{\cdc{Pure true}}
\\ \den{\hbox{\cdc{false}}}_A &= \text{\cdc{Pure false}}
\\ \den{x}_A &= \text{\cdc{EmbedA (GetData x)}}
\\ \den{\neg t}_A &= \text{\cdc{negb <\$>} } \den{t}_A
\\ \den{t \wedge u}_A &= \text{\cdc{LiftA2} \cdc{andb} $\den{t}_A$  $\den{u}_A$}
\\ \den{t \vee u}_A &= \text{\cdc{LiftA2} \cdc{orb} $\den{t}_A$  $\den{u}_A$}
\end{array}\]
\end{minipage}
\end{tabular}
\caption{Semantic embeddings of \lan{} in Coq. We use the infix
operator \cdc{<\$>} to represent a functor's \cdc{fmap} method and a notation
similar to Haskell's \cdh{do} notation to represent monadic binds. The
functions \cdc{negb}, \cdc{andb}, and \cdc{orb} are Coq's functions defined on
the \cdc{bool} type.}\label{fig:reader-monad}\label{fig:shallow-translation}\label{fig:term}\label{fig:deep-translation}\label{fig:freer}\label{fig:mixed-translation}\label{fig:statically-adv}
\end{figure}

\begin{figure}[t]
\begin{tabular}{l}
\begin{lstlisting}[language=Coq]
Class Functor (F : Type -> Type) :=
  { fmap : forall {A B}, (A -> B) -> F A -> F B }.
Class Applicative (F : Type -> Type) `{Functor F} :=
  { pure   : forall {A}, A -> F A ;
    liftA2 : forall {A B C}, (A -> B -> C) -> F A -> F B -> F C }.
Class Selective (F : Type -> Type) `{Applicative F} :=
  { selectBy  : forall {A B C}, (A -> ((B -> C) + C)) -> F A -> F B -> F C }.
Class Monad (F : Type -> Type) `{Applicative F} :=
  { ret : forall {A}, A -> F A ;
    bind : forall {A B}, F A -> (A -> F B) -> F B }.
\end{lstlisting}
\\
\\
\emph{Default \cdc{fmap} definitions}\\
\codeplus{ProgramAdverbs/Examples/Section2.v}{fmap_monad}{language=Coq} \\
\codeplus{ProgramAdverbs/Examples/Section2.v}{fmap_ap}{language=Coq} \\
\\
\end{tabular}

\caption{Coq type classes for functors, applicative
  functors~\citep{applicative}, selective functors~\citep{selective}, and
  monads~\citep{moggi-monad, wadler-monad}, as well as default definitions
  of \cdc{fmap}.}\label{fig:functor-and-ap}\label{fig:selective-and-monad}\end{figure}

\subsection{A Shallow Embedding}

To use a shallow embedding to represent the semantics of \lan, we need a way to
represent the effects of reading from external devices---the most common way of
doing this is
using \emph{monads}~(\cref{fig:selective-and-monad}). But \emph{which} one? A
simple option is the reader monad~\citep{wadler-monad, mark95}. We show core
definitions of a specialized reader monad at the top left of
\cref{fig:reader-monad}.\footnote{For simplicity, we specialize the monad so
  that its environment has type \cdc{var -> bool}. The commonly used reader
  monad is more general that the type of its environment is parameterized.}  The
  translation from \lan{} to \cdc{Reader bool} is given in the same figure.
  Following the terminology used by \citet{deep-and-shallow}, we
  call \cdc{Reader bool} the
\emph{semantic domain} of our shallow embedding. Of course, the reader monad
is just one possible semantic domain, other candidates include Dijkstra
monads~\citep{dijkstra-monad}, predicate transformer
semantics~\citep{pt-semantics}, \etc\@

Using the reader monad, we can prove that property (1) is true, using
($\simeq_S$), the pointwise equality of functions. More specifically, we can
prove the following Coq theorem:
\begin{lstlisting}[language=Coq,mathescape=true]
forall x, ask x $\simeq_S$ x1 <- ask x; x2 <- ask x; ret (andb x1 x2)
\end{lstlisting}
We ``ask'' twice on the right hand side of the equivalence to model accessing
variable $x$ twice during program runtime. However, \cdc{x1} equals to \cdc{x2}
in our case since nothing has changed the global store. After proving that, the
theorem can be proved via a case analysis on \cdc{x1}.

However, note that our proof relies on ``nothing has changed the global store,''
but we don't know if this is true, as we don't know anything about the
characteristics of the external device. Indeed, property (1) should \emph{not}
be true if we have a device where its values change over time: the value of $x$
might change between two variable access. This is a problem with our choice of
semantic domain. By choosing the reader monad, we introduce more assumptions
over the semantics of \lan, which results in proving a property that is not
supposed to be true in the original language \lan.

Although this is not a problem with the approach of shallow embedding---we can
choose a different monad than the reader monad, the style does force us to
choose a concrete semantic domain early. In practice, we sometimes need to
change the semantic domain, either because we made a wrong assumption or
because the language evolves. With shallow embeddings, we would need to change
the entire translation process to change this domain.

Unlike property (1), property (2) is true even though we don't know anything
about the external device. This is because on both sides of the equivalence
relation we have only accessed the variable $x$ once.
\ifextended
Property (2) can be stated as follow with our shallow embedding:
\begin{lstlisting}[language=Coq,mathescape=true]
forall x, ask x $\simeq_S$ x1 <- ask x; ret (andb x1 true)
\end{lstlisting}
\fi
The proof follows from the theories of Coq's \cdc{bool} type and
the \cdc{Reader} monad. However, even though this property should be true
regardless of the external device, our mechanical proof still relies on the
assumption that the external device is immutable---this is again because the
property is stated in terms of the reader monad. If we change the shallow
embedding to use a different semantic domain, we would need to prove this
property again.

Property (3) is true and we can prove it to be true using our shallow embedding,
but that is just a lucky hit. Even though we know nothing about the external
device, there is an equivalence between $t \wedge u$ and $u \wedge t$ because
the two operands $t$ and $u$ run in parallel in a circuit. A proof based on our
shallow embedding would, on the other hand, be based on the wrong assumption
that the external device is immutable.

We cannot state property (4) with our shallow embedding. Our shallow embedding
does not retain the syntactic structure of the original program so we cannot
define a function that calculates the depth of the circuit.

\subsection{A Deep Embedding}

In a deep embedding, we first define an abstract syntax tree~(AST) for \lan. For
example, we can use the \cdc{term} data type shown in \cref{fig:term}. Our
translation from \lan{} to the \cdc{term} is shown in the same figure. Note that
the \cdc{term} data type does not encode \emph{any} semantic meaning.

Without an interpretation, we cannot prove any of the first three
properties. This is actually ideal for answering question (1) since we know
nothing about the external device so we should not be able to prove it (nor
should we be able to prove it wrong!). However, by leaving the entire syntax
tree uninterpreted we are now unable to prove property (2) or (3), either.

A way out of this quandary is to define a coarser \emph{equivalence relation}
for ASTs and use that relation in the statement of properties (2) and (3). For
example, we can interpret each \cdc{term} using the reader monad (as in the
shallow embedding) and use the pointwise equality for that type. The
proofs are essentially the same as the above.

One advantage of the deep embedding in this case is that, if we would like to
change our definition of equivalence, we can do so by choosing a different
\emph{interpretation} without changing the translation process. In other
words, deep embeddings achieve better modularity by introducing an intermediate
layer. The price, however, is that it takes effort to build that extra
intermediate layer.  This extra effort seems small here, but can become tedious
with some languages, \eg those with features like ``let'' that introduce
variable bindings~\citep{poplmark}.

However, we still face a similar problem with the shallow embedding: If we
would like to change the interpretation in our definition of equivalence, we
need to prove our properties again. This suggests that another intermediate
layer between deep and shallow embeddings might be helpful, as we will see
in the next subsection.

The primary benefit we have by using the deep embedding is that we can now state
and prove property (4). This is because the deep embedding gives us a
representation of the program's original syntactic structure. This allows us to
define the following function that counts the depth of a circuit:
\codeplus{ProgramAdverbs/Examples/Section2.v}{depth}{language=Coq}
Since we assume a straightforward semantics for \lan, the number of variable
access at runtime equals to the number of variables appeared in a \cdc{term}, so
we can directly prove property (4) by an induction over the \cdc{term} data
type.

\subsection{A Mixed Embedding Based on Freer Monads}\label{subsec:circuit-study}

A semantic embedding can be partially shallow and partially deep. We use the
term \emph{mixed embeddings} to describe embeddings with this property.  One
style of mixed embeddings that is popular today is based on \emph{freer
monads}~\citep{itree, mcbride-free, one-monad, steelcore, freespec2,
reaching-star, adam-binder}. In this type of mixed embeddings, the pure parts of
the program are embedded shallowly, while effects are embedded deeply (and
abstractly) using algebraic data types ``connected'' by freer monads.

The core definitions of freer monads are in the left column of
\cref{fig:freer}. The \cdc{FreerMonad} data type is parameterized by an
abstract effect \cdc{E} of \cdc{Type -> Type} and a return type \cdc{R}
of \cdc{Type}. Conceptually, it collects all the deeply embedded effects \cdc{E}
in a right-associative monadic structure.

For any effect type, \cdc{FreerMonad E} is a monad as demonstrated by the
\cdc{Ret} constructor and \cdc{bind} function.  The \cdc{bind} function
pattern matches its first argument \cdc{m} and, in the case of \cdc{Bind},
passes its second arguments \cdc{k} to the continuation of \cdc{m}. This
``smart constructor'' ensures that binds always associate to the right.

To embed \lan{}, we model reading data from external devices using the effect
type \cdc{DataEff}. This datatype includes only one (abstract) effect,
called \cdc{GetData}. This constructor represents a data retrieval with the
variable \cdc{v : var} that returns an unknown \cdc{bool}. Similar to how
the \cdc{term} data type says nothing about the semantics of \lan{}, the effect
data type \cdc{DataEff} says nothing about the semantics of a data read. As a
result, we say that the effects are embedded deeply in this style.

The embedding function appears on the right side of
\cref{fig:mixed-translation}. The translation strategy is almost the same as
embedding \lan{} using the reader monad. The only exception is in the variable
case~(the effectful part): here the \cdc{Bind} constructor marks the
occurrence of the \cdc{GetData} effect.

In this mixed embedding, the pure parts of a \lan{} program have been translated
to a shallow semantic domain, but the effectful parts remain abstract. It turns
out that this separation is useful for both questions (1) and (2).

For question (1), we cannot answer it. This is desirable since we don't know if
it's true without knowing more about the external device.

\begin{figure}[t]
\begin{align*}
\textsc{Left identity} \quad  : \quad & \eqRule{ret a >>= h}{h a} \\
\textsc{Right identity} \quad  : \quad & \eqRule{m >>= ret}{m} \\
\textsc{Associativity} \quad  : \quad & \eqRule{(m >>= g) >>= h}{m >>= (fun x => g x >>= h)}
\end{align*}
\caption{The monad laws. The \cdc{>>=} symbol is the infix operator
for \cdc{bind}.}\label{fig:monad-laws}
\end{figure}

We can prove that property (2) is true even though the read effect is not
interpreted---this is because the property follows from the monad
laws~(\cref{fig:monad-laws}). However, we cannot prove property (3) because the
commutativity law is not one of the monad laws.

Ideally, we would also like to state and prove property (4). However, the
dynamic nature of freer monads forbids us from statically inspecting the
syntactic structure of the program. Interpreting the embedding does not help us,
either, since that would not preserve the original syntactic structure.

Our success with questions (1) and (2) suggests that we have found an useful
intermediate layer between shallow and deep embeddings, but our failure in
stating or proving properties (3) and (4) indicates that we haven't yet found
the most suitable representation for this circuit language.

\subsection{Another Mixed Embedding Based on Reified Applicative Functors}\label{sec:applicative-embedding}

The last embedding shown in the figure uses a type that reifies the interface
of \emph{applicative functors}~(\cref{fig:functor-and-ap}). As in freer monads,
this datatype is parameterized by deeply embedded abstract effects. These
effects, of type \cdc{E R}, are recorded by the \cdc{EmbedA} data constructor.

However, instead of constructors for \cdc{ret} and \cdc{bind}, this datatype
includes constructors for \cdc{pure} and \cdc{liftA2}, the two operations that
define applicative functors.\footnote{Alternatively, \cdc{Applicative} can also
be defined by \cdh{pure} and another operation \cdh{<*>} of type \cdc{F (A -> B)
-> F A -> F B}, where \cdc{F} is an \cdc{Applicative} instance. These two
definitions are equivalent, as we can derive the definition of \cdc{<*>}
from \cdc{liftA2} and vice versa.} The \cdc{Pure} constructor shallowly
``embeds'' a pure computation into the domain, and
\cdc{LiftA2} ``connects'' two computations that potentially contain effect
invocations. These constructors provide a trivial implementation of the
\cd{Applicative} type class for this datatype.

The translation of \lan{} to this datatype uses a deep embedding of variable
reads, using the \cdc{EmbedA} data constructor with the \cd{DataEff} type from
the previous embedding. Because, as in freer monads, this effect is modeled
abstractly, we cannot prove or disprove (1).

The translation function uses the applicative interface in the datatype to
translate the constants, unary and binary operators. These components are
modeled shallowly (\ie as Boolean constants and operators), but the program's
syntactic structure is retained by the translation. However, because of the
retainment, we need an additional equivalence relation to equate semantically
equivalent terms that are not syntactically equal. To prove (2), we include the
right identity law of applicative functors in the equivalence~(denoted by
$\cong$):
\begin{mathpar}
\inferrule{\forall \hbox{\zcdc{y}},\ \hbox{\zcdc{(fun _ x => x) a y = f a y}}}{\congRule{liftA2 f (pure a) b}{b}}{}
\end{mathpar}
This law is sufficient to show that (2) holds.

To model the parallelism of circuits, we could include the commutativity law in
the equivalence:
$$\congRule{liftA2 f a b}{liftA2 (flip f) b a}$$ This is sufficient to show
(3). Note that this is not one of the applicative functor laws. We defer showing
the soundness of including this rule in the equivalence
to \cref{subsec:in-parallel}.

This embedding also preserves enough of the syntax of the original program to
prove (4). To do so, we must first calculate the depth of circuits and the
number of variables under this encoding.
\codeplus{ProgramAdverbs/Examples/Section2.v}{app_depth}{language=Coq}
\ifextended
\codeplus{ProgramAdverbs/Examples/Section2.v}{app_numVar}{language=Coq}
\else
We omit the function that counts the number of variables as it is similar
to \cdc{app_depth}.
\fi
Then we can formalize (4) in Coq as follows:
\begin{lstlisting}[language=Coq]
Theorem heightAndVar : forall (c : ReifiedApp DataEff bool),
    app_numVar c <= Nat.pow 2 (app_depth c).
\end{lstlisting}
The theorem is provable by an induction over \cdc{c}.

\ifextended
This theorem is not coincidentally true. We can also prove that for any circuit
in the image of the encoding it has the same depth and number of variables as in
the deep embedding.
\fi

Furthermore, this embedding also allows us to reason about semantic properties
that depend on syntactic structures of circuits. One example is a semantics with
some cost model. In the semantics, we may not want our equivalence to equate,
for example, $x \wedge y \wedge z \wedge w$ and $(x \wedge y) \wedge (z \wedge
w)$ because they are not equivalent in their costs when parallelization is
present. Indeed, we cannot show that they are equivalent with our embedding due
to the absence of associativity in our equivalence.

\subsection{\logics{}}

Just as the reader monad models \emph{one} particular effect, freer monads
model \emph{one} particular computation pattern. Unfortunately, that
particular computation pattern is not suitable for our \lan{} example, because
it does not model parallel computation~(\ie property (3)), nor does it capture
the static data and control flows~(\ie property (4)). Instead we saw that the
mixed embedding in the previous subsection, based on reified applicative
functors, is a better approach.

Can we generalize the key idea even further? If we go beyond \lan, we might need
to model other computation patterns. Are there other mixed embeddings that would
be suitable for these tasks? How might we derive them?

To that end, we identify a novel set of mixed embeddings that we
call \emph{\logics{}}. The goal of these embeddings is to provide flexibility in
our models of effectful computation.\footnote{Here, we define effects as
communications with external environment that are performed by some explicit
operations. For example, \emph{mutable states} are effects which can be
explicitly incurred by operations such as \cdc{get} and \cdc{set}. For the same
reason, we also consider I/O~(with operations
like \cdc{read}, \cdc{print},~\etc) and exceptions~(with operations
like \cdc{throw},~\etc) as effects.} We define \logics{} by identifying a set
of \emph{\structs{}} that specify the embedding type and equational theory used
in the embedding. For example, the embedding in \cref{sec:applicative-embedding}
is based on an adverb composed of the \cd{ReifiedApp} type and \emph{some} rules
of commutative applicative functors.

The flexibility that \structs{} provide can perhaps be understood by comparing
them with effects: effects \emph{do} certain actions, and \structs{} model
\emph{how} these actions are done---similar to the difference between verbs
and adverbs. For example, the adverb we used in \cref{sec:applicative-embedding}
is called ``statically and in parallel'', which states that there is a static
dependency between different effect invocations and some of these effect
invocations are executed in parallel.

In the next section, we define our set of program adverbs more precisely and
discuss the reasoning principles that they provide for effectful computation.

\section{Program Adverbs}\label{sec:statically}

Program adverbs are the building blocks of \logics{}. Mathematically, they are
composed of two parts: a syntactic part, called the \structtype, and a semantic
part, called the adverb theory. More formally, we define \structs{} as
follows:\footnote{The Coq code of all definitions and theorems shown in this
section can also be found in our supplementary artifact~\citep{tlon-artifact}.}

\begin{definition}[Program Adverb]
A program adverb is a pair $(D, \cong_D)$. $D$ is called the adverb data type
and is parameterized by an effect $E$ and a return type $R$. The $\cong_D$
operation is called the adverb theory of $D$. It is a binary operation that
defines an equivalence relation on $D(E, R)$ for any $E$ and $R$.
\end{definition}
In the rest of the paper, we abbreviate $\cong_D$ as $\cong$ when $D$ is clear
from the context.

In Coq terms, an \structtype{} \cdc{D} has the type \cdc{(Type -> Type) -> Type
-> Type}. The first parameter of \cdc{Type -> Type} is the effect $E$ and it's
parameterized by its own return type; the second parameter is the return type
$R$. The adverb theory $\cong$ is a typed binary relation.\footnote{In addition
to equivalence relations, we can also define refinement relations
on \structs. We will show in \cref{subsec:addons} some adverbs with refinement
relations, but equivalence relations would suffice for most adverbs, so we only
include them in the core definitions of adverb theories. Refinement relations
can be added on demand.} More concretely:
\begin{lstlisting}[language=Coq]
Class Adverb (D : (Type -> Type) -> Type -> Type) :=
  { Equiv {E R} : relation (D E R) ;
    equiv {E R} : Equivalence (@Equiv E R) }.
Notation "a ≅ b" := (Equiv a b).
\end{lstlisting}
where \cdc{D} is the \structtype, \cdc{Equiv} is the adverb theory $\cong$,
and \cdc{equiv} is a proof showing that \cdc{Equiv} is an equivalence
relation. The datatype \cdc{relation} is defined as:
\begin{lstlisting}[language=Coq]
Definition relation (A : Type) := A -> A -> Prop.
\end{lstlisting}

This definition is overly general, so we focus our attention only on program
adverbs that are \emph{sound} according to the definition that we will develop
below. Furthermore, in this paper we will only consider adverbs defined by
reifying classes of functors.

\subsection{Adverb Data Types and Theories}

\begin{figure}[t]
\codeplus{ProgramAdverbs/Adverb/Streamingly.v}{streamingly_adv}{language=Coq}
\codeplus{ProgramAdverbs/Adverb/Statically.v}{statically_adv}{language=Coq}
\codeplus{ProgramAdverbs/Adverb/Conditionally.v}{conditionally_adv}{language=Coq}
\codeplus{ProgramAdverbs/Adverb/Dynamically.v}{dynamically_adv}{language=Coq}
\caption{The \structtypes{}}\label{fig:advs}
\end{figure}

The four key adverb data types, shown in \cref{fig:advs}, are derived from the
four type classes shown in \cref{fig:functor-and-ap}. We have already seen one
before in the applicative embedding in \cref{fig:statically-adv}. Other
definitions follow a similar pattern: the constructors of each data type include
one for embedding effects (of type \cd{E R}) and a constructor that reifies the
interface of each method of the type class.

In addition to an \structtype, every \struct{} also comes with some theories,
defined by an equivalence relation $\cong$. The purpose of the $\cong$ relation
is to equate all computations that are semantically equivalent regardless of
what effects are present.

For example, an adverb called \cdc{Statically} is composed of the
\cd{ReifiedApp} datatype with an equational theory based on three sorts of
rules: (1)~a congruence rule with respect to \cdc{LiftA2}, (2)~the laws of
applicative functors~\citep{applicative}, and (3)~the equivalence
properties~(\ie~reflexivity, symmetry, transitivity). We show the concrete rules
in \cref{fig:ap-equiv-rules}.

\begin{figure}[t]
\textbf{Congruence Rule}
\begin{mathpar}
\textsc{Congruence} \quad : \quad \inferrule{\congRule{a1}{a2} \\ \congRule{b1}{b2}}{\congRule{liftA2 f a1 b1}{liftA2 f a2 b2}}{}
\end{mathpar}
\\
\textbf{Applicative Functor Laws}
\begin{align*}
\textsc{Left Identity} \quad  : \quad & \inferrule{\forall \hbox{\zcdc{y}},\ \hbox{\zcdc{(fun _ x => x) a y = f a y}}}{\congRule{liftA2 f (pure a) b}{b}}{} \\
\textsc{Right Identity} \quad  : \quad & \inferrule{\forall \hbox{\zcdc{x}},\ \hbox{\zcdc{(fun x _ => x) x b = f x b}}}{\congRule{liftA2 f a (pure b)}{a}}{} \\
\textsc{Associativity} \quad  : \quad & \inferrule{\forall \hbox{\zcdc{x y z}},\ \hbox{\zcdc{f x y z = g y z x}}}{\congRule{liftA2 id (liftA2 f a b) c}{liftA2 (flip id) a (liftA2 g b c)}}{} \\
\textsc{Naturality} \quad  : \quad & \inferrule{\forall \hbox{\zcdc{x y z}},\ \hbox{\zcdc{p (q x y) z = f x (g y z)}}}{\congRule{liftA2 p (liftA2 q a b)}{liftA2 f a . liftA2 g b}}{}
\end{align*}
\\
\textbf{Equivalence Properties}
\begin{mathpar}
\textsc{Reflexivity} \quad : \quad \inferrule{\ }{\congRule{a}{a}}{} \qquad
\textsc{Symmetry} \quad : \quad \inferrule{\congRule{a}{b}}{\congRule{b}{a}}{} \\
\textsc{Transitivity} \quad : \quad \inferrule{\congRule{a}{b} \\ \congRule{b}{c}}{\congRule{a}{c}}{}
\end{mathpar}
\caption{The equivalence relation for \cdc{ReifiedApp}. The infix operator \cdc{.} denotes function compositions.}\label{fig:ap-equiv-rules}
\end{figure}

Why do we call this adverb \cdc{Statically}? The data dependency in
the \cdc{LiftA2} constructor of \cdc{ReifiedApp} shows that the data type
imposes a ``static'' data flow and control flow on the computation: we will
always need to run both parameters of type \cdc{ReifiedApp E A}
and \cdc{ReifiedApp E B} to get the result of type \cdc{ReifiedApp E C}, \ie we
cannot skip either computation. In addition, neither of the two parameters
depends on the result of the other, which allows us to statically inspect either
of them without running the other.

\paragraph{Remark}

The \structtypes{} and their associated theories form free structures similar to
those in \citet{free-ap, freer, selective, selective-impl}. However, one
distinction is that we intentionally do not normalize the \structtypes{} to
preserve syntactic structures. To distinguish un-normalized free structures and
normalized free structures, we use the term \emph{reified} structures to
describe the former and the term free structures to exclusively describe the
latter.  We defer the detailed comparison and trade-offs between reified
structures and free structures to \cref{sec:compare-freer}.

\subsection{Adverb Simulation}

One important property of \cdc{ReifiedApp} is that it can be interpreted to any
other instance of the \cdc{Applicative} class, as long as its embedded effects
can be interpreted to that instance. We can show this via the abstract
interpreter \cdc{interpA} shown in \cref{fig:interpA}. The interpreter shows
that given \emph{any} effect \cdc{E} and \emph{any} instance \cdc{I}
of \cdc{Applicative}, as long as we can find an effect interpretation
from \cdc{E A} to \cdc{I A} for any type \cdc{A}, we can interpret
a \cdc{ReifiedApp E A} to an \cdc{I A} for any type \cdc{A}.

\begin{figure}[t]
\codeplus{ProgramAdverbs/Adverb/Statically.v}{interpA}{language=Coq}
\caption{The interpretation from \cdc{ReifiedApp} to any instance of
the \cdc{Applicative} type class.}\label{fig:interpA}
\end{figure}

For example, we can interpret a \cdc{ReifiedApp DataEff} to the reader
applicative functor~(\cref{fig:reader-monad})\footnote{Every monad is also an
applicative functor, so the reader monad is also a reader applicative functor.}
by supplying the following function to the parameter \cdc{interpE}
of \cdc{interpA}:
\begin{lstlisting}[language=Coq]
Definition interpDataEff {A : Type} (e : DataEff A) : Reader A :=
  match e with GetData v => ask v end.
\end{lstlisting}
Similarly, we can interpret \cdc{ReifiedApp DataEff} to other semantic
domains that are applicative functors.

Why do we care if \cdc{ReifiedApp} can be interpreted into any instance
of \cdc{Applicative}? This is because different instances of \cdc{Applicative}
model different effects---if we have a data structure that can be interpreted to
all instances, we can develop a theory of it that can be used for reasoning
about properties that are true regardless of what effects are present.

To make the relation between an \structtype{} like \cdc{ReifiedApp} and
a class of functors like \cdc{Applicative} more precise, we define the
following \emph{adverb simulation} relation:
\begin{definition}[Adverb Simulation]
Given an \structtype{} $D$, a class of functors $C$, and a transformer $T$ on all
instances of $C$, we say that there is an adverb simulation from $D$ to $C$
under $T$, written $D \models_T C$, if we can define a function that, for any
effect type $E$, instance $F$ of type class $C$, and interpreter $f$ from $E(A)$
to $F(A)$ for any type $A$, interprets a value of $D(E, A)$ to $T(F)(A)$ for any
type $A$.
\end{definition}
We add some flexibility to this definition by making it parameterize over a
transformer $T$---we do not need this extra flexibility for now, but we will see
why it is useful in \cref{subsec:in-parallel}.

We also define an \emph{adverb interpretation} as follows:
\begin{definition}[Adverb Interpretation]
Given an \structtype{} $D$, a class of functors $C$, and a transformer $T$ on all
instances of $C$, an interpreter $I$ that shows $D \models_T C$ is called an
adverb interpretation, and we write $I \in D \models_T C$.
\end{definition}

Our \cdc{interpA} in \cref{fig:interpA} is an adverb interpretation. More
specifically, we say that
$$\hbox{\cdc{interpA}} \in \hbox{\cdc{ReifiedApp}} \models_{\hbox{\cdc{IdT}}} \hbox{\cdc{Applicative}}$$
where the \cdc{IdT} transformer is an identity \cdc{Applicative} transformer
that ``does nothing''. In the rest of the paper, when we have
$D \models_{\hbox{\cdc{IdT}}} C$ for any $D$ and $C$, we abbreviate it as
$D \models C$.

\subsection{Sound Adverb Theories}

To know that our adverb theory is \emph{sound}, \ie it doesn't equate
computations that are not semantically equivalent, we define the following
soundness property of adverb theories:
\begin{definition}[Soundness of Adverb Theories]
Given a \struct{} $(D, \cong)$ and an adverb interpretation $I \in D \models_T
C$, we say that the adverb theory $\cong$ is sound with respect to $I$ if there
exists a lawful equivalence relation $\equiv$ such that for all $d_1, d_2 \in
D$,
$$d_1 \cong d_2 \implies I(d_1) \equiv I(d_2).$$
\end{definition}
Let us use \cdc{idT} for the transformer $T$ for the moment. The equivalence
relation $\equiv$ on $C$ is lawful if they respect the congruence laws and the
class laws of $C$. For \cdc{Applicative}, we use the common applicative functor
laws regarding $\equiv$. Based on the soundness of adverb theories, we can
define the following soundness property of program adverbs with respect to their
adverb interpretations:
\begin{definition}[Soundness of Program Adverbs]
Given a \struct{} $(D, \cong)$ and an adverb interpretation $I \in D \models_T
C$, we say that the adverb is sound if the $\cong$ relation is sound with
respect to $I$.
\end{definition}

We can now prove that the \cdc{Statically} adverb is sound:
\begin{theorem}\label{thm:sound-statically}
The \cdc{Statically} adverb $(\hbox{\zcdc{RefieidApp}}, \cong)$ is sound with
respect to the adverb interpretation
$\hbox{\cdc{interpA}} \in \hbox{\cdc{ReifiedApp}} \models \hbox{\cdc{Applicative}}$.
\end{theorem}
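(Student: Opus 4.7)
The plan is to prove the theorem by rule induction on the derivation of $d_1 \cong d_2$, showing that each of the ten rules in \cref{fig:ap-equiv-rules} is preserved by \cdc{interpA}. First I would fix, for the target \cdc{Applicative} instance $F$, the pointwise-equality of $F$-computations (or any other equivalence $\equiv$ known to be a congruence for \cdc{pure} and \cdc{liftA2} and to validate the standard applicative functor laws) as the witnessing lawful equivalence $\equiv$ required by the soundness definition. Once this choice is made, the goal of the induction for each rule becomes a pure equation between \cdc{interpA} applied to the two sides.

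The equivalence-relation cases (\textsc{Reflexivity}, \textsc{Symmetry}, \textsc{Transitivity}) are discharged directly by the corresponding properties of $\equiv$. The \textsc{Congruence} case uses the induction hypotheses $\hbox{\zcdc{interpA}}(a_1) \equiv \hbox{\zcdc{interpA}}(a_2)$ and $\hbox{\zcdc{interpA}}(b_1) \equiv \hbox{\zcdc{interpA}}(b_2)$ together with the fact that \cdc{interpA} sends \cdc{LiftA2 f a b} to \cdc{liftA2 f (interpA a) (interpA b)} and that \cdc{liftA2} in the target instance is a congruence for $\equiv$ in both arguments (this is part of what it means for $\equiv$ to be lawful for \cdc{Applicative}).

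For the four applicative-law cases, the structural equation of \cdc{interpA} on \cdc{LiftA2} and \cdc{Pure} means that the translated goals are exactly the standard \textsc{Left Identity}, \textsc{Right Identity}, \textsc{Associativity}, and \textsc{Naturality} laws of lawful applicative functors applied to the interpreted subterms. The side conditions present in the rules of \cref{fig:ap-equiv-rules} (of the form $\forall x\;y\;z,\; \ldots$) are about the \emph{Coq functions} inside the constructors and are unaffected by \cdc{interpA}; they are precisely the hypotheses needed to invoke the corresponding applicative law on $F$, so each case closes immediately once the target law is cited.

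The main obstacle, and the only real content of the proof, is choosing $\equiv$ and packaging the ``lawful applicative'' assumptions needed for the target instance $F$. In particular, we need $\equiv$ to be an equivalence that is a congruence for \cdc{liftA2} and for which the four applicative laws in their \cdc{liftA2}-based formulation hold; without those, neither the \textsc{Congruence} case nor the four law cases would go through. Since the applicative laws are traditionally stated with \cdc{<*>} rather than \cdc{liftA2}, a subsidiary lemma (or appeal to \citet{applicative}) converting between the two presentations may be needed, but this is routine and introduces no essential difficulty.
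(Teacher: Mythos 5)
Your proposal is correct and takes essentially the same approach as the paper, which proves the theorem simply ``by induction over the $\cong$ relation''; your case analysis (equivalence cases from the properties of $\equiv$, the congruence case from congruence of \cdc{liftA2}, and the four law cases from the corresponding laws of the lawful applicative instance) is exactly the elaboration of that induction.
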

\begin{proof}
By induction over the $\cong$ relation.
\end{proof}

\subsection{``Statically and in Parallel''}\label{subsec:in-parallel}

Two adverbs can use the same data type yet differ in their theories. Let's look
at a variant of the \cdc{Statically} adverb called
\cdc{StaticallyInParallel}. As its name suggests, it adds parallelization to a
static computation pattern.

Recall that the two computations connected by \cdc{liftA2} do not depend on each
other. This suggests that an implementation of \cdc{liftA2} can choose to run
them in parallel. Indeed, that observation is one of the key ideas behind
Haxl~\citep{haxl}.

Based on this idea, we also define the \cdc{StaticallyInParallel}
adverb. The \structtype{} of this adverb is the same as that
of \cdc{Statically}. However, its theory differs from \cdc{Statically} in the
following ways: (1)~it adds the commutativity rule: $$\congRule{liftA2 f a
b}{liftA2 (flip f) b a}$$ and (2)~it does not include the associativity and
naturality rules~(\cref{fig:ap-equiv-rules}).

The addition of commutativity rule states that the order that effects are
invoked does not matter. Note that compared with other rules, the commutativity
rule is not satisfied by every applicative functor. This might suggest that we
should not add it to the theory, as it might be a theory that only holds for
certain effects. Nevertheless, we can prove the soundness of the adverb theory
with respect to the following adverb simulation:
$$\hbox{\cdc{ReifiedApp}} \models_{\hbox{\cdc{PowerSet}}} \hbox{\cdc{Applicative}}$$
The \cdc{PowerSet} transformer is \emph{a transformer on applicative functors}
and its core definitions are shown in \cref{fig:nd_trans}. The key
of \cdc{PowerSet} is the \cdc{liftA2PowerSet} operation. When executed, it
creates two nondeterministic branches~(indicated by the disjunction \cdc{\\/}):
on one branch, it computes \cdc{a' : I A} before \cdc{b' : I B}, and vice versa
on the other branch. Intuitively, this is to model the nondeterministic
execution order in a parallel evaluation. Many of these operations depend on
$\equiv$, which is the lawful $\equiv$ relation on \cdc{I}.

\begin{figure}[t]
\begin{lstlisting}[language=Coq]
Definition PowerSet (I : Type -> Type) (A : Type) := I A -> Prop.

Definition embedPowerSet {A : Type} (a : I A) : PowerSet I A := fun r => r ≡ a.

Definition purePowerSet {A : Type} (a : A) : PowerSet I A := fun r => r ≡ pure a.

Definition liftA2PowerSet {A B C} (f : A -> B -> C)
                     (a : PowerSet I A) (b : PowerSet I B) : PowerSet I C :=
  fun r => exists a', a a' /\ exists b', b b' /\
           (liftA2 f a' b' ≡ r \/ liftA2 (flip f) b' a' ≡ r).

Definition EqPowerSet {A} : relation (PowerSet I A) :=
  fun p q => forall a, p a <-> q a.
\end{lstlisting}
\caption{The core definitions of a powerset applicative functor
transformer.}\label{fig:nd_trans}
\end{figure}

\begin{lemma}\label{lem:nd}
If $\equiv$ is a lawful equivalence relation
on \cdc{Applicative}, \cdc{EqPowerSet} is an equivalence relation
on \cdc{PowerSet I} that satisfies congruence, left identity, right identity,
and commutativity laws.
\end{lemma}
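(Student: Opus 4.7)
The plan is to unfold \cdc{EqPowerSet} and verify each clause in turn. The equivalence-relation part is immediate: since $\cdc{EqPowerSet}\ p\ q$ unfolds to the pointwise bi-implication $\forall a,\ p\,a \iff q\,a$, reflexivity, symmetry, and transitivity transfer directly from the corresponding properties of $\iff$.

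Next I would dispatch commutativity, which is essentially built into the definition. Unfolding $\cdc{liftA2PowerSet}\ f\ a\ b$ and $\cdc{liftA2PowerSet}\ (\cdc{flip}\ f)\ b\ a$, both predicates are a pair of nested existentials whose body is the same disjunction of the form ``$\cdc{liftA2}\ f\ a'\ b' \equiv r$ or $\cdc{liftA2}\ (\cdc{flip}\ f)\ b'\ a' \equiv r$''. Renaming the bound variables, using $\cdc{flip}\ (\cdc{flip}\ f) = f$, and commuting the two disjuncts yields the pointwise $\iff$. Congruence is equally routine: substituting the hypotheses $a_1\,a' \iff a_2\,a'$ and $b_1\,b' \iff b_2\,b'$ under the two existentials preserves the overall iff.

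The main work will be in the two identity laws. For left identity I would unfold, and in the forward direction obtain witnesses $a'$ with $a' \equiv \cdc{pure}\,a$ and $b'$ with $b\,b'$, then case-analyse the disjunction. In the first disjunct, I transport the applicative left-identity law on $I$ across the congruence $a' \equiv \cdc{pure}\,a$ (using the hypothesis on $f$) to obtain $b' \equiv r$. In the second disjunct, I observe that if $f$ satisfies the left-identity condition then $\cdc{flip}\ f$ satisfies the right-identity condition with $a$ as the pure argument, so the applicative right-identity law on $I$ applied to $\cdc{liftA2}\ (\cdc{flip}\ f)\ b'\ (\cdc{pure}\ a)$ again yields $b' \equiv r$. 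The backward direction is direct: pick $a' = \cdc{pure}\ a$ and $b' = r$, and the applicative left-identity law on $I$ selects the first disjunct. Right identity is symmetric.

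The subtle point that I expect will dominate the work is that concluding $b\,r$ from $b\,b'$ and $b' \equiv r$ requires $b$ to respect $\equiv$ as a predicate. This $\equiv$-closedness is not stated explicitly in the type \cdc{PowerSet I A}, but it holds automatically for every \cdc{PowerSet} arising from \cdc{purePowerSet}, \cdc{embedPowerSet}, or \cdc{liftA2PowerSet} (in each case by transitivity of $\equiv$ applied to the defining equation). Hence the identity laws are really stated about the $\equiv$-closed sub-collection, and the proof proceeds under that invariant, which I expect to carry through the rest of the development without issue.
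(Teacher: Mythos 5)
Your proposal is correct and takes the same route as the paper, whose entire proof of this lemma is simply ``By definition''---you have carried out explicitly the unfolding and case analysis that the authors leave implicit. Your closing observation is also well taken: the forward direction of the identity laws does require the operand predicate to be closed under $\equiv$ (to pass from $b\,b'$ and $b' \equiv r$ to $b\,r$), an invariant that is not visible in the bare type \cdc{PowerSet I A} but holds for every predicate built from \cdc{embedPowerSet}, \cdc{purePowerSet}, and \cdc{liftA2PowerSet}, which is the only way the lemma is applied in the soundness proof for \cdc{StaticallyInParallel}.
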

\begin{proof}
By definition.
\end{proof}
Note that \cdc{EqPowerSet I} does not satisfy the associativity or naturality
laws. Consider that we have \cdc{liftA2PowerSet id (liftA2PowerSet f a b) c},
for some \cdc{f}, \cdc{a}, \cdc{b}, and \cdc{c}: one of the possible evaluations
in this powerset is \cdc{liftA2 id (liftA2 (flip f) b a) c}, which does not
belong to the powerset of \cdc{liftA2PowerSet (flip id) a (liftA2PowerSet g b
c)}, for some \cdc{g} that is equivalent to \cdc{flip f}. The case for
naturality is similar. For this reason, we do not include these two rules in
$\cong$. We do not know if there exists an alternative nontrivial transformer
with an equivalence relation that satisfies \emph{all} the applicative laws in
addition to commutativity.

Nevertheless, we can show the following theorem with the help of \cref{lem:nd}:
\begin{theorem}
The adverb is sound:
$\hbox{\cdc{ReifiedApp}} \models_{\hbox{\cdc{PowerSet}}} \hbox{\cdc{Applicative}}$.
\end{theorem}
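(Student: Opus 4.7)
The plan is to proceed in two stages: first, exhibit the adverb interpretation witnessing $\hbox{\cdc{ReifiedApp}} \models_{\hbox{\cdc{PowerSet}}} \hbox{\cdc{Applicative}}$, and then verify that the equational theory of \cdc{StaticallyInParallel} is preserved by that interpretation. For the first stage, I would define an interpreter that, given an effect interpretation \cdc{interpE : forall A, E A -> I A} into an \cdc{Applicative} instance \cdc{I}, recursively translates \cdc{ReifiedApp E A} into \cdc{PowerSet I A} by sending \cdc{Pure a} to \cdc{purePowerSet a}, \cdc{EmbedA e} to \cdc{embedPowerSet (interpE _ e)}, and \cdc{LiftA2 f x y} to \cdc{liftA2PowerSet f} applied to the recursively interpreted subterms. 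Structurally this mirrors \cdc{interpA} from \cref{fig:interpA}, but lifts every combinator through the \cdc{PowerSet} transformer; it is well-typed because \cref{fig:nd_trans} supplies exactly the three operations needed, namely \cdc{purePowerSet}, \cdc{embedPowerSet}, and \cdc{liftA2PowerSet}.

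For the second stage, I would prove soundness by induction on the derivation of $d_1 \cong d_2$, showing that the two interpretations are related by \cdc{EqPowerSet}. Each rule of the \cdc{StaticallyInParallel} theory reduces to a corresponding property of \cdc{EqPowerSet}: the equivalence rules follow from \cdc{EqPowerSet} being an equivalence relation, the congruence rule from congruence of \cdc{liftA2PowerSet}, the left and right identity rules from the matching identity laws, and the commutativity rule from the commutativity law. \cref{lem:nd} supplies all of these facts, provided the underlying $\equiv$ on \cdc{I} is lawful. Crucially, because the theory deliberately omits associativity and naturality, no case of the induction demands properties that \cdc{EqPowerSet} fails to satisfy; this is precisely why the theory of \cdc{StaticallyInParallel} was trimmed in the preceding subsection.

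The main obstacle, and the one that already motivated \cref{lem:nd}, is the commutativity case: we must show that the interpretations of \cdc{liftA2 f a b} and \cdc{liftA2 (flip f) b a} coincide as subsets of \cdc{I C}. This goes through because \cdc{liftA2PowerSet} was engineered to include both evaluation orders in its disjunctive branches, so swapping arguments corresponds to swapping the two disjuncts and the resulting powersets agree elementwise. Once this observation and the identity laws are in hand, the remaining work is the routine bookkeeping of unfolding \cdc{EqPowerSet} to a bidirectional membership implication and chasing existentials through the definitions in \cref{fig:nd_trans}; no additional transformer machinery beyond \cref{lem:nd} is required.
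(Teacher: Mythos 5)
Your proposal matches the paper's proof: the paper likewise constructs \cdc{interpPowerSet} by replacing the three cases of \cdc{interpA} with \cdc{purePowerSet}, \cdc{embedPowerSet}, and \cdc{liftA2PowerSet}, and then discharges soundness by appealing to \cref{lem:nd} for each rule of the trimmed theory (induction over the $\cong$ derivation, as in \cref{thm:sound-statically}). Your added observation about why the commutativity case goes through is correct and consistent with the paper's design of \cdc{liftA2PowerSet}.
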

\begin{proof}
We can construct an
$\hbox{\cdc{interpPowerSet}} \in \hbox{\cdc{ReifiedApp}} \models_{\hbox{\cdc{PowerSet}}} \hbox{\cdc{Applicative}}$
by modifying \cdc{interpA}~(\cref{fig:interpA}) so that it
uses \cdc{embedPowerSet} on the \cdc{EmbedA} case, \cdc{purePowerSet} on
the \cdc{Pure} case, and \cdc{liftA2PowerSet} on the \cdc{LiftA2} case. With the
help of \cref{lem:nd}, we can show that for all $d_1, d_2 \in$ \cdc{ReifiedApp},
$$d_1 \cong
d_2 \implies \hbox{\cdc{interpPowerSet}}(d_1) \equiv \hbox{\cdc{interpPowerSet}}(d_2)$$
where $\equiv$ is \cdc{EqPowerSet}.
\end{proof}

Intuitively, we can define \cdc{StaticallyInParallel} as an adverb because, even
though with an effect running computations in different order might return
different results, a language can be implemented in a parallel way such that the
difference in evaluation orders is no longer observable.

The lack of associativity and naturality rules in the theory
of \cdc{StaticallyInParallel} might initially sound limiting, but, as we have
shown in the end of \cref{sec:applicative-embedding}, it turns out to be
desirable for applications like circuits.

\subsection{Other Basic Adverbs}

Besides \cdc{Statically} and \cdc{StaticallyInParallel}, we also identify three
other basic adverbs, namely \cdc{Streamingly}, \cdc{Conditionally},
and \cdc{Dynamically}, defined using the \structtypes{} in \cref{fig:advs}.

\paragraph{\cdc{Streamingly}.}
This \struct{} simulates \cdc{Functor} under \cdc{IdT}. The most simple form of
stream processing computes the data directly as it is received. This is captured
by the \cdc{fmap} interface~(\cref{fig:functor-and-ap}).

\paragraph{\cdc{Dynamically}.}
This adverb simulates \cdc{Monad}~(\cref{fig:selective-and-monad}). A monad is
the most expressive and dynamic among all four classes of functors thanks to its
core operation \cdc{bind}. Any kind of computation can happen in the second
operand and we can't know it without knowing a value of type \cdc{A}, which we
can only get by running the first operand.
This \struct{} is commonly used in representing many programming language for
its expressiveness, but it also allows for the least amount of static
reasoning.

Unlike \cdc{Statically}, this variant does not have an \cdc{InParallel}
variant. This might be surprising because there are many commutative
monads. However, those monads are commutative because their specific effects are
commutative. We cannot define a general powerset \emph{monad transformer} that
can make any monad satisfy the commutativity law.

\paragraph{\cdc{Conditionally}.}
We use this adverb to model conditional execution.
The definition of its \structtype{} is shown in \cref{fig:advs}. It reifies
the \cdc{Selective} type class~(\cref{fig:selective-and-monad}). The signature
operation of \cdc{Selective} is the \cdc{selectBy} operation. Loosely,
``applying'' a function of type \cdc{A -> ((B -> C) + C)} to a computation of
type \cdc{F A} gets you either \cdc{F (B -> C)} or \cdc{F C}. In the first case,
you will need to run the computation of type \cdc{F B}. You don't \emph{need} to
run the computation of type \cdc{F B} in the second case, but you can still
choose to run it.

Because we can encode conditional execution with this adverb, it is more
expressive than \cdc{Statically}. However, the extra expressiveness also makes
static analysis less accurate. Since we cannot know statically if the
computation \cdc{F B} in \cdc{selectBy} is executed, we can only get an
under-approximation~(assuming that \cdc{F B} is not executed) and an
over-approximation~(assuming that \cdc{F B} is executed) of the effects that
would happen, but not an exact set.

Even though we derive this adverb by reifying \cdc{Selective}, we do not wish to
model the adverb's theory using the laws of selective functors. This is
because the laws of selective functors do not distinguish them from applicative
functors. Indeed, every applicative functor is also a selective functor~(by
running the second argument even when not required) and vice versa, so
adhering to the ``default'' laws do not allow us to prove more
properties. Therefore, we add one simple rule to the selective functor laws:
$$\congRule{select (inr <$> a) b}{a}$$ The function \cdc{select} has the
type \cdc{F (A + B) -> F (A -> B) -> F B}, where \cdc{F} is an instance
of \cdc{Selective}. It is equivalent to
\begin{lstlisting}[language=Coq]
selectBy (fun x => match x with
                   | inl x => inl (fun y => y x)
                   | inr x => inr x
                   end).
\end{lstlisting}

This rule forces \cdc{select} to ignore the second argument when it does not
need run. However, we can no longer show that \cdc{Conditionally} adverb
simulates \cdc{Selective} by adding this laws, because $\cong$ is no longer an
under-approximation of $\equiv$. Instead, we show the following adverb
simulation:
$$\hbox{\cdc{ReifiedSelective}} \models \hbox{\cdc{Monad}}$$

In this way, \cdc{Conditionally} serves as a compromise between \cdc{Statically}
and \cdc{Dynamically}. Its \structtype{} is more similar to \cdc{Statically} and
allows for some static analysis, while its theories are more similar
to \cdc{Dynamically}.

\section{Composable Program Adverbs}\label{sec:combine-structs}

From a monad instance, we can derive an applicative functor instance. From an
applicative functor instance, we can derive a functor instance. We can derive a
selective instance from an applicative functor and vice versa.\footnote{This is
one special thing about selective functors: every selective functor is an
applicative functor and the reverse is also true. However, separating these two
classes is still useful because the automatically derived instances might not be
what we want, as discussed in \citet{selective}.} This subsumption hierarchy
among classes of functors means that we can choose the most expressive abstract
interface of a data type, and that choice automatically includes the less
expressive interfaces.

However, although we can derive a ``default'' applicative functor from a monad,
we don't always want to do that---\eg we may want to define a different behavior
for \cdc{liftA2} than the one derived from \cdc{bind}. Indeed, Haxl is one such
example, where \cdc{bind} is defined as a sequential operation and \cdc{liftA2}
is parallel so that certain tasks with no data dependencies can be automatically
parallelized~\citep{haxl}. In the \structs{} terminology, the semantics of their
language is composed of a ``statically and in parallel'' adverb and a
``dynamically'' adverb.

In addition, some languages may have a part that corresponds to the
``statically'' adverb and some extensions that correspond to ``dynamically''. If
we only use the ``dynamically'' adverb to reason about programs written in this
language, we lose the ability to state properties for the ``statically'' subset.

We need a way to compose multiple \structs. Therefore, in this section, we
refactor \structs{} to \emph{composable \structs}.

\subsection{Uniform Treatment of Effects and Program Adverbs}\label{subsec:uniform}

Effects are commonly considered secondary to monads. This treatment of effects
carries over to the approaches based on freer monads and our previous
implementation of \structs, where the effects are a parameter of \structtypes.

This approach works well when we use one fixed \struct, but needs an update when
multiple adverbs are involved. This is because, in both scenarios we mentioned
earlier, our intention is not to combine \structs{} that each contain their own
set of effects---we would like the composed \structs{} to share the same set of
effects. One solution is requiring that we can only join \structs{} when they
share the same set of effects, but that would require extra machinery.

In our work, we choose to give a uniform treatment to effects
and \structs. \Cref{fig:structs-algebra} shows our algebra for effects
and \structs. The algebra includes an $\oplus$ operator which is
a \emph{disjoint union} of effects \emph{and} \structtypes. We define an
equivalence relation $\approx$ on effects and \structtypes{} as follows: for all
$A, B$ that are effects and \structtypes, $A \approx B$ if there exists
a \emph{bijection} between $A$ and $B$. Similarly, we define an $\uplus$
operator for the disjoint union of adverb theories. We define an equivalence
relation $\Leftrightarrow$ on adverb theories as follows: for any \structtype{}
$D$ and adverb theories $P, Q$, which are adverb theories of $D$,
$P \Leftrightarrow Q$ if $a\ P\ b \Longleftrightarrow a\ Q\ b$ for all $a, b \in
D$, where $\Longleftrightarrow$ is the logical symbol for ``if and only
if''. Properties of this algebra are also shown in \cref{fig:structs-algebra}.

\begin{figure}[t]
\begin{align*}
\emph{effects and \structtypes}\qquad & A, B, C &\ ::= &\ \hbox{\zcdc{Effect}}\ E \;|\; \hbox{\zcdc{AdverbDataType}}\ D \;|\; A \oplus B \\
\emph{adverb theories}\qquad & P, Q, R &\ ::= &\ \hbox{\zcdc{AdverbTheory}}\ \cong_D \;|\; P \uplus Q
\end{align*}

\textbf{Properties of $\oplus$}
\begin{align*}
\textsc{Commutativity}&\ : &\ A \oplus B \approx B \oplus A \\
\textsc{Associativity}&\ : &\ (A \oplus B) \oplus C \approx A \oplus (B \oplus C)
\end{align*}

\textbf{Properties of $\uplus$}
\begin{align*}
\textsc{Commutativity}&\ : &\ P \uplus Q \Leftrightarrow Q \uplus P \\
\textsc{Associativity}&\ : &\ (P \uplus Q) \uplus R \Leftrightarrow P \uplus (Q \uplus R) \\
\textsc{Idempotence}&\ : & P \uplus P \Leftrightarrow P
\end{align*}

\caption{The algebra for effects and composable \structs.}\label{fig:structs-algebra}
\end{figure}

\subsection{The Coq Implementation}\label{subsec:coq-impl}

All the \structtypes{} we have seen~(\cref{fig:advs}) are recursive. When we
compose these \structs, we cannot simply put these inductive types into a sum
type---we need to adapt each adverb so that it recurses on the new composed
adverb rather than itself. In other words, we need \emph{extensible inductive
types}. However, extensible inductive types are not directly supported by most
formal reasoning systems including Coq. In fact, how to support extensible
inductive types is part of an open problem known as \emph{the expression
problem}~\citep{expression}.

\begin{figure}[t]

\begin{subfigure}[b]{\textwidth}
\codeplus{ProgramAdverbs/Fix/Fix.v}{fix1}{language=Coq}
\codeplus{ProgramAdverbs/Fix/Fix.v}{fixRel}{language=Coq}
\caption{The algebra and the least fixpoint operators for effects and
\structtypes{}~(\cdc{Alg1}, \cdc{Fix1}), and for adverb theories~(\cdc{AlgRel,
FixRel}).}\label{fig:fixes}
\end{subfigure}

\begin{subfigure}[b]{\textwidth}
\begin{lstlisting}[language=Coq]
Variant Sum1 (F G : (Set -> Set) -> Set -> Set) K R :=
  Inl1 (a : F K R) | Inr1 (a : G K R).
Variant SumRel {F : Set -> Set}
        (P Q : (forall (A : Set), relation (F A)) -> forall (A : Set), relation (F A))
        (K : forall (A : Set), relation (F A)) : forall (A : Set), relation (F A) :=
| InlRel {A : Set} {a b : F A} : P K _ a b -> SumRel P Q K _ a b
| InrRel {A : Set} {a b : F A} : Q K _ a b -> SumRel P Q K _ a b.

Notation "F ⊕ G" := (Sum1 F G).
Notation "F ⊎ G" := (SumRel F G).
\end{lstlisting}
\caption{The Coq definitions for the $\oplus$ and $\uplus$ operators.}\label{fig:sums}
\end{subfigure}
\caption{Key definitions for implementing composable \structs{} in Coq.}\label{fig:fixes-sums-coq}
\end{figure}

In this paper, we address the problem and implement composable adverbs in Coq
using a technique presented in \emph{Meta Theory \`{a} la
Carte}~(MTC)~\citep{mtc}. The key idea of MTC is using Church encodings of data
types~\citep{lfix, modular-visitor} instead of Coq's native inductive types. We
apply and extend this idea to define the two least fixpoint operators \cdc{Fix1}
and \cdc{FixRel} that work on \structtypes{} and adverb theories,
respectively. We show the definitions of these operators in \cref{fig:fixes}.

We define the disjoint union $\oplus$ by first refactoring the types
of \structtypes{} and effects. We make both \structtypes{} and effects have the
type \cdc{(Set -> Set) -> Set -> Set} where the first parameter is
a \emph{recursive parameter} and the second parameter is a return type. We can
then define $\oplus$ simply as a sum type on \cdc{(Set -> Set) -> Set -> Set},
as shown in \cref{fig:sums}. Similarly, we define $\uplus$ as a sum type
on \cdc{(forall (A : Set), relation (F A)) -> forall (A : Set), relation (F A)} for
any \cdc{F : Set -> Set}.

\Cref{fig:adverb_data_types} shows the definitions of composable \structtypes{}.
Compared with the \structtypes{} in \cref{fig:advs}, a composable \structtype{}
replaces the effect parameter~(which is named \cdc{E}) with a recursive
parameter~(which is named \cdc{K}) so that it ``recurses'' on \cdc{K} instead of
itself.

We also factor out the \cdc{Pure} constructor, a common part shared by multiple
basic \structtypes, as a separate composable \structtype{}
called \cdc{ReifiedPure}. In this way, we avoid introducing multiple \cdc{Pure}
constructors, \eg by combining \cdc{Statically}
and \cdc{Conditionally}. Furthermore, we remove the \cdc{Embed} constructors in
composable \structtypes. Thanks to the uniform treatment of effects
and \structs, we can now embed effects simply by including them in \cdc{K}, so
we have no need for those constructors.

\begin{figure}[t]
\begin{subfigure}[b]{\textwidth}
\begin{lstlisting}[language=Coq]
Variant ReifiedPure (K : Set -> Set) (R : Set) : Set :=
| Pure (r : R).
Variant ReifiedFunctor (K : Set -> Set) (R : Set) : Set :=
| FMap {X : Set} (g : X -> R) (f : K X).
Variant ReifiedApp (K : Set -> Set) (R : Set) : Set :=
| LiftA2 {X Y : Set} (f : X -> Y -> R) (g : K X) (a : K Y).
Variant ReifiedSelective (K : Set -> Set) (R : Set) : Set :=
| SelectBy {X Y : Set} (f : X -> ((Y -> R) + R)) (a : K X) (b : K Y).
Variant ReifiedMonad (K : Set -> Set) (R : Set) : Set :=
| Bind {X : Set} (m : K X) (g : X -> K R).
\end{lstlisting}
\caption{The composable \structtypes.}\label{fig:adverb_data_types}
\end{subfigure}

\begin{subfigure}[b]{\textwidth}
\begin{lstlisting}[language=Coq]
Class AppKleenePlus (F : Type -> Type) `{Applicative F} :=
  { kplus {A} : F A -> F A }.
Class FunctorPlus (F : Type -> Type) `{Functor F} :=
  { plus {A} : F A -> F A -> F A }.

(* The adverb data type for Repeatedly. *)
Variant ReifiedKleenePlus (K : Set -> Set) (R : Set) : Set :=
| KPlus : K R -> ReifiedKleenePlus K R.
(* The adverb data type for Nondeterministically. *)
Variant ReifiedPlus (K : Set -> Set) (R : Set) : Set :=
| Plus : K R -> K R -> ReifiedPlus K R.
\end{lstlisting}
\caption{The \structtypes{} of \cdc{Nondeterministically} and \cdc{Repeatedly}.}\label{fig:addons}
\end{subfigure}

\caption{The composable \structtypes{} and add-on \structtypes.}
\end{figure}

As an example, we can define an ``inductive type''\ \cdc{T : Set -> Set} that is
composed of \cdc{ReifiedPure}, \cdc{ReifiedApp}, and some effect \cdc{E : (Set
-> Set) -> Set -> Set} as follows:
\begin{lstlisting}[language=Coq]
Definition T := Fix1 (ReifiedPure ⊕ ReifiedApp ⊕ E).
\end{lstlisting}
The \cdc{T} data type here is equivalent to the
non-composable \cdc{ReifiedApp} shown in \cref{fig:advs}.

Adverb interpretation can be defined as an algebra of type \cdc{Alg1 F
E}~(\cref{fig:fixes}) where \cdc{F} is the \structtype{} and \cdc{E} is the
instance we are interpreting to. To apply this ``interpretation algebra'' to the
composed ``inductive type'', we fold it over \cdc{Fix1} as follows:
\begin{lstlisting}[language=Coq]
Definition foldFix1 {E A} (alg : Alg1 F E) (f : Fix1 F A) : E A := f _ alg.
\end{lstlisting}

We define all composable \structtypes{} using \cdc{Set} rather than \cdc{Type}
because we use the impredicative sets extension in Coq, following MTC.\@ The
consequence of this decision is that (1)~certain types cannot inhabit \cdc{Set},
and (2)~the extension is inconsistent with certain set of axioms such as the
axiom of unique choice together with the law of excluded
middle.\footnote{\url{https://github.com/coq/coq/wiki/Impredicative-Set}} We
also develop other mechanisms like the injection type classes, the induction
principles following MTC.\@ We omit more detail here due to the space
constraint. The interested readers can find them in MTC or our supplementary
artifact~\citep{tlon-artifact}.

Besides MTC, there are other solutions that address the expression problem in
theorem provers like Coq. We discuss those alternative solutions
in \cref{sec:discussion}.

\subsection{Add-on Adverbs}\label{subsec:addons}

Another benefit of making \structs{} composable is that we can now define two
add-on adverbs, namely \cdc{Repeatedly} and \cdc{Nondeterministically}, which
are not suitable as standalone adverbs. These two adverbs reify two classes of
functors, namely \cdc{AppKleenePlus} and \cdc{FunctorPlus}, that we define
ourselves. We show these classes of functors and their reifications
in \cref{fig:addons}. \cdc{AppKleenePlus} is a subclass of \cdc{Applicative} and
represents the ``Kleene
plus''.\footnote{\url{https://en.wikipedia.org/wiki/Kleene_star\#Kleene_plus}}
It is a ``Kleene plus'' rather than a ``Kleene star'' because no empty element
is defined. \cdc{FunctorPlus} is similar to the commonly-used \cdc{Alternative}
and \cdc{MonadPlus} type classes in Haskell, but contains no empty element and
only requires itself to be a subclass of \cdc{Functor}. We define these type
classes' reifications as add-on adverbs so that these adverbs can be composed
with classes of functors at different expressive levels: \eg \cdc{Repeatedly}
can be composed with \cdc{Statically} as well as \cdc{Dynamically}.

We show the adverb theories of \cdc{Repeatedly} and \cdc{Nondeterministically}
in \cref{fig:addon-class}. Both of these two add-on adverbs are somewhat
nondeterministic, so one change we make to their adverb theories is adding
refinement relations~($\subseteq$) in addition to equivalence
relations~($\cong$).

\begin{figure}[t]
\centering
\begin{align*}
\textsc{Repeat} \quad  : \quad & \forall n,\ \subseteqRule{repeat a n}{kplus a} \\
\textsc{Kplus} \quad  : \quad & \inferrule{\subseteqRule{a}{kplus b}}{\subseteqRule{kplus a}{kplus b}}{} \\
\textsc{Commutativity} \quad  : \quad & \congRule{plus a b}{plus b a} \\
\textsc{Associativity} \quad  : \quad & \congRule{plus a (plus b c)}{plus (plus a b) c} \\
\textsc{Plus} \quad : \quad & \inferrule{\subseteqRule{a}{c} \\ \subseteqRule{b}{c}}{\subseteqRule{plus a b}{c}}{} \\
\textsc{Left Plus} \quad  : \quad & \subseteqRule{a}{plus a b} \\
\textsc{Right Plus} \quad  : \quad & \subseteqRule{b}{plus a b}
\end{align*}
\caption{The adverb theories for \cdc{Repeatedly}
and \cdc{Nondeterministically}. The function \cdc{repeat a n} repeats \cdc{a}
for \cdc{n} times. Functions \cdc{kplus} and \cdc{plus} are smart constructors
of \cdc{KPlus} and \cdc{Plus}, respectively.}\label{fig:addon-class}
\end{figure}

\begin{figure}[t]
\begin{lstlisting}[language=Coq]
(* FunctorPlus transformer. *)
Definition fmapPowerSet {A B : Type} (f : A -> B) (a : PowerSet I A) : PowerSet I B :=
  fun r => exists a', a a' /\ fmap f a' ≡ r.

Definition plusPowerSet {A : Type} (a b : PowerSet I A) : PowerSet I A :=
  fun r => a r \/ b r.

(* AppKleenePlus transformer. *)
Definition liftA2PowerSet {A B C : Type} (f : A -> B -> C)
           (a : PowerSet I A) (b : PowerSet I B) : PowerSet I C :=
  fun r => exists a' b', a a' /\ b b' /\ (liftA2 f a' b' ≡ r).

Fixpoint repeatPowerSet {A : Type} (a : PowerSet I A) (n : nat) : PowerSet I A :=
  match n with
  | 0 => a
  | S n => liftA2PowerSet (fun _ x => x) a (repeatPowerSet a n)
  end.

Definition kplusPowerSet {A : Type} (a : PowerSet I A) : PowerSet I A :=
  fun r => exists n, repeatPowerSet a n r.
\end{lstlisting}
\caption{The \cdc{FunctorPlus} transformer instance and the \cdc{AppKleenePlus}
transformer instance of the \cdc{PowerSet} data type. $\equiv$ is the lawful
equivalence relation on original functor/applicative
functor \cdc{I}.}\label{fig:powerset-addon}
\end{figure}

We show that these two adverbs are sound with respect to the following adverb
simulations:
\begin{align*}
\hbox{\cdc{ReifiedKleenePlus}} & \models_{\hbox{\cdc{PowerSet}}} \hbox{\cdc{AppKleenePlus}} \\
\hbox{\cdc{ReifiedPlus}} & \models_{\hbox{\cdc{PowerSet}}} \hbox{\cdc{FunctorPlus}}
\end{align*}
The definition of \cdc{PowerSet} data type is the same as that
in \cref{fig:nd_trans}, but we are using its \cdc{AppKleenePlus} transformer
and \cdc{FunctorPlus} transformer instances here. The core definitions of these
transformers are shown in \cref{fig:powerset-addon}.

\section{Examples}\label{sec:case-study}

In this section, we use two different examples to demonstrate the usefulness and
different aspects of \structs{} and \logics.

\subsection{Haxl}\label{subsec:haxl-study}

In our first example, we show that we can use composable adverbs to capture two
different computation patterns in the same library. We also demonstrate
interpreting composable adverbs to a shallow embedding in a modular way.

We illustrate these aspects via an example based on the core ideas of Haxl. Haxl
is a Haskell library developed and maintained by Meta~(formerly known as
Facebook) that automatically parallelizes certain operations to achieve better
performance~\citep{haxl}. As an example, suppose that we want to fetch data from
a database and we have a \cdc{Fetch : Type -> Type} data type that encapsulates
the fetching effect. The key insight of the Haxl library is to distinguish the
operations of \cdc{Fetch}'s \cdc{Monad} instance and those of
its \cdc{Applicative} instance. When we use \cdc{>>=} to bind two \cdc{Fetch}s,
those data fetches are sequential; but when we use \cdc{liftA2} to bind them,
those data fetches are batched and will be sent to the database together. To
achieve this, it is important that the definition of \cdc{liftA2} is not
equivalent to the ``default'' definition derived from \cdc{>>=}.

This design of Haxl poses a challenge to mixed embeddings based on freer monads
or any other basic adverbs discussed in \cref{sec:statically}, because we need
to distinguish when \cdc{Applicative} operations are used and when \cdc{Monad}
operations are used. This is exactly where composable adverbs are useful.

In this example, we assume that we already have a translation from
Haxl's \cdc{Applicative} and \cdc{Monad} operations to those operations in
Coq.\footnote{Tools like \hstocoq~\citep{hs-to-coq, containers} can be adapted
to implement the translation.} In our embedding, we use the following \cdc{T}
datatype to encode the \logic{} of a data fetching program:
\begin{lstlisting}[language=Coq]
Definition T := Fix1 (ReifiedPure ⊕ ReifiedApp ⊕ ReifiedMonad ⊕ DataEff).
\end{lstlisting}
We use \cdc{ReifiedApp} to model batched operations and the theory
of \cdc{StaticallyInParallel} to model their parallel nature. We
use \cdc{ReifiedMonad} to model sequential operations.

\begin{figure}[t]
\begin{subfigure}[b]{\textwidth}
\begin{lstlisting}[language=Coq]
Definition Update A :=  ((var -> val) -> A * nat).

Definition ret {A} (a : A) : Update A := fun map => (a, 0).
Definition bind {A B} (m : Update A) (k : A -> Update B) : Update B :=
  fun map => match m map with
             | (i, n) => match (k i map) with
                         | (r, n') => (r, n + n')
                         end
             end.
Definition liftA2 {A B C} (f : A -> B -> C) (a : Update A) (b : Update B) :
  Update C := fun map => match (a map, b map) with
                         | ((a, n1), (b, n2)) => (f a b, max n1 n2)
                         end.
Definition get (v : var) : Update val := fun map => (map v, 1).
\end{lstlisting}
\caption{The \cdc{Update} datatype.}\label{fig:state-haxl}
\end{subfigure}

\begin{subfigure}[b]{\textwidth}
\begin{lstlisting}[language=Coq]
Class AdverbAlg (D : (Set -> Set) -> Set -> Set) (I : Set -> Set) :=
  { adverbAlg : Alg1 D I }.

Instance CostApp : AdverbAlg ReifiedApp Update :=
  {| adverbAlg := fun d => match d with LiftA2 f a b => liftA2 f a b end |}.
Instance CostMonad : AdverbAlg ReifiedMonad Update :=
  {| adverbAlg := fun d => match d with Bind m k => bind m k end |}.
Instance CostPure : AdverbAlg ReifiedPure Update :=
  {| adverbAlg := fun d => match d with Pure a => ret a end |}.
Instance CostData : AdverbAlg DataEff Update :=
  {| adverbAlg := fun d => match d with GetData v => get v end |}.

Instance AdverbAlgSum D1 D2 I `{AdverbAlg D1 I} `{AdverbAlg D2 I} :
  AdverbAlg (D1 ⊕ D2) I name :=
  {| adverbAlg := fun a => match a with
                          | Inl1 a => adverbAlg a
                          | Inr1 a => adverbAlg a
                          end |}.
\end{lstlisting}
\caption{Interpretation algebras that interpret composable adverbs
and \cdc{DataEff} to \cdc{Update}. Thanks to Instance \cdc{AdverbAlgSum} and
Coq's type class inference, we can automatically get the interpretation
from \cdc{T} to \cdc{Update}.}\label{fig:interp-alg}
\end{subfigure}
\caption{The \cdc{Update} datatype and the interpretation from \cdc{T} to \cdc{Update}.}\label{fig:update}
\end{figure}

We cannot know statically how many database accesses would happen in a Haxl
program, because a program can choose to do different things depending on the
result of some data fetch. Therefore, we need to pick an effect interpretation
for \cdc{DataEff} to reason about this property. In this example, we are
assuming that the database does not change, so we interpret our \logic{} to a
shallow embedding whose semantic domain is the update monad~\citep{update}.

The key definitions of the update monad are shown in \cref{fig:state-haxl}. The
update monad is essentially a combination of a reader monad and a writer
monad. In our example, the ``reader state'' has type \cdc{var -> val} which
represents an immutable key-value database we can read from. The ``writer
state'' is a \cdc{nat}, which represents the accumulated number of database
accesses. The \cdc{bind} operation propagates the key-value database and
accumulates the cost.

Additionally, we define a \cdc{liftA2} operation, which only records the maximum
number of database accesses in one of its branches. This is not the same as
the \cdc{liftA2} operation that can be automatically derived from the monad
instance of \cdc{Update}. Furthermore, this \cdc{liftA2} is commutative. Thanks
to that, we can interpret \cdc{T} to the \cdc{Update} datatype without the
help of a \cdc{PowerSet} transformer.

\Cref{fig:interp-alg} shows how we interpret composed adverbs in a modular way.
First, we define a type class called \cdc{AdverbAlg} for interpretation
algebras. We then define an interpretation from each individual composable
adverb and effect in \cdc{T} to \cdc{Update}. Finally, the interpretation
from \cdc{T} to \cdc{Update} can be automatically inferred by Coq thanks to
the instance \cdc{AdverbAlgSum}. If we would like to add another effect or
composable adverb to \cdc{T}, we only need to add one more instance
of \cdc{AdverbAlg} and we do not need to modify any existing interpretation
algebras.

Interested readers can find the full Coq implementation of the \cdc{Update} data
type, the \cdc{AdverbAlg} type class and relevant instances, along with a few
simple examples in our supplementary artifact~\citep{tlon-artifact}.

\subsection{A Networked Server}\label{subsec:server-study}

A common technique used in formal verification is dividing the verification into
multiple layers and establishing a refinement relation between every two
layers~\citep{ccal, armada, vellvm-itree, itree-server}. This approach offers
better abstraction and modularity, as at each layer, we only need to consider
certain subsets of properties.

In this example, we show the usefulness of \structs{} and \logics{} in a layered
approach. Specifically, we define an \emph{intermediate-level} specification
that omits implementation details about execution order, \etc\@ Since the
specification is only more \emph{nondeterministic} in its control flow, we would
like our formal verification to show that an implementation refines the
specification \emph{without} interpreting effects to a shallow embedding. This
is exactly where \structs{} and \logics{} can help.

We demonstrate this vision above via a simple server adapted from that
of \citet{itree-server}. The server communicates with multiple clients via a
network interface. A client initiates a communication with the server by sending
a request that is a number. Whenever the server receives a request, it stores
the number of that request and sends back a number in its store---a client does
not necessarily receive what they sent before, because the server can interleave
multiple sessions.

We show that a specific implementation of such a server refines an
intermediate-level specification. We also show the refinements based
on \logics{} with the help of adverb theories. Unlike \citet{itree-server}, we
do not show that the implementation further refines a higher-level specification
based on observations over a network, as that is beyond the scope of this work.

\begin{figure}[t]
\hspace{1em}
\begin{subfigure}[t]{0.4\textwidth}
\begin{lstlisting}[numbers=left,frame=single]
newconn ::<- accept ;;
IF (not (*newconn == 0)) THEN
  newconn_rec ::=
    connection *newconn READING ;;
  conns ::++ newconn_rec
END ;;
FOR y IN conns DO
  IF (y->state == WRITING) THEN
    r ::<- write y->id *s ;;
    y->state ::= CLOSED
  END ;;
  IF (y->state == READING) THEN
    r ::<- read y->id ;;
    IF (*r == 0) THEN
      y->state ::= CLOSED
    ELSE
      s ::= *r ;;
      y->state ::= WRITING
    END
  END
END.
\end{lstlisting}
\caption{The implementation \texttt{Impl} in \socketnet.}\label{fig:socketnet-server}
\end{subfigure}
\hspace{0.5em}
\begin{subfigure}[t]{0.49\textwidth}
\begin{lstlisting}[frame=single]
Some
  (Or (newconn ::<- accept ;;
       IF (not (*newconn == 0)) THEN
         newconn_rec ::=
           connection *newconn READING ;;
         conns ::++ newconn_rec
       END)
       (OneOf (conns) y
         (Or (IF (y->state == WRITING) THEN
                r ::<- write y->id *s ;;
                y->state ::= CLOSED
              END)
             (IF (y->state == READING) THEN
                r ::<- read y->id ;;
                IF (*r == 0) THEN
                  y->state ::= CLOSED
                ELSE
                  s ::= *r ;;
                  y->state ::= WRITING
                END
              END))))
\end{lstlisting}
\caption{The specification \texttt{Spec} in \socketnetspec.}\label{fig:socketnet-spec}
\end{subfigure}
\caption{The implementation and the intermediate layer specification of our
networked server.}\label{fig:server}
\end{figure}

\paragraph{The implementation.}
The server is implemented using a single-process \emph{event
loop}~\citep{flash}. Instead of processing a request and sending back a response
immediately, the server divides a session with a client into multiple steps. In
each iteration of the event loop, the server advances the session of each
request by one step, thus interleaving multiple sessions.

We show the main loop body of our adapted version of the networked server
in \cref{fig:socketnet-server}. For simplicity, we use a custom language
called \socketnet. \socketnet{} supports datatypes like booleans, natural
numbers, and a special record type called \texttt{connection}. It has network
operations like \cdc{accept}, \cdc{read}, and \cdc{write}. All these operations
return natural numbers, with 0 indicating failures. The language does not have a
while loop but it has a \texttt{FOR} loop that iterates over a list. The loop
variable is implemented as a pointer that points to elements in the list
iteratively. We also use C-like notations~(\ie~\texttt{*} and \texttt{->}) for
operations on pointers.

The implementation \texttt{Impl} maintains a list of connections
called \texttt{conns}.\@ Each connection in \texttt{conns} is in one of the
three possible states: \cdc{READING}, \cdc{WRITING}, or \cdc{CLOSED}. At the
start of each loop, the server checks if there is a new connection waiting to be
established by calling the non-blocking operation \cdc{accept}. If there is, the
server creates a new \texttt{connection} with the \cdc{READING} state and adds
it to \texttt{conns}. The server then goes over each \texttt{connection}
in \texttt{conns}: if a connection is in the \cdc{READING} state, the server
tries to read from the connection and updates an internal state \texttt{s} with
the recently read value; if a connection is in the \cdc{WRITING} state, the
server sends the current value of its internal state \texttt{s} to the
connection; once a connection enters the \cdc{CLOSED} state, it remains that
state forever and the server will not do anything with it---we design the server
in this way for simplicity; a more realistic server should remove closed
connections from \texttt{conns}.

\paragraph{The specification.}

We show our specification \texttt{Spec}
in \cref{fig:socketnet-spec}. \texttt{Spec} is written in a language
called \socketnetspec. \socketnetspec{} adds a few additional commands
to \socketnet: \texttt{Some} is an unary operation that models the ``Kleene
plus''; \texttt{Or} is a binary operation that models a nondeterministic choice
wrapped inside a ``Kleene plus''; \texttt{OneOf} is like \cdc{Or}, but it
nondeterministically chooses from a list---line 8 means that we
nondeterministically assign the variable \texttt{y} with one element from the
list in \texttt{conns}.

\texttt{Spec} is more nondeterministic compared with \texttt{Impl}. At each
iteration of the event loop, \texttt{Impl} always first tries to \texttt{accept}
a connection. It then goes over the list of \texttt{conns} in a fixed
order. \texttt{Spec} does not enforce order: an \texttt{accept} could happen
immediately after another \texttt{accept}; we can access elements
in \texttt{conns} in any order and some connection might get visited more often
than others.

\paragraph{\logics{} and the refinement proof.}

To show that \texttt{Impl} refines \texttt{Spec}, we embed both \socketnet{}
and \socketnetspec{} in Coq using \structs. We use the following datatype:
\begin{lstlisting}[language=Coq]
Definition T := Fix1 (ReifiedKleenePlus ⊕ ReifiedPlus ⊕ ReifiedPure ⊕ ReifiedMonad ⊕
  NetworkEff ⊕ MemoryEff ⊕ FailEff).
\end{lstlisting}
We have already seen the first four adverbs. Effect \cdc{NetworkEff} models the
effects incurred by network operations \texttt{accept}, \texttt{read},
and \texttt{write}. Effect \cdc{MemoryEff} models the effects incurred by
assigning values to variables and retrieving values from them. Finally,
effect \cdc{FailEff} models when the program crashes.

\ifextended
We use $\den{\cdot}_{\text{\cdc{T}}}^L$ to denote a language $L$'s \logic{}
in \cdc{T}. For the sake of space, we only show how we
embed \texttt{Some}, \texttt{Or}, and \texttt{OneOf} here:
\begin{align*}
\den{\texttt{Some c}}_{\text{\zcdc{T}}}^S &= \hbox{\zcdc{kplus}}\ \den{\hbox{\zcdc{c}}}_{\text{\zcdc{T}}}^S
\\ \den{\texttt{Or c1 c2}}_{\text{\zcdc{T}}}^S &= \hbox{\zcdc{kplus (plus}}\ \den{\hbox{\zcdc{c1}}}_{\text{\zcdc{T}}}^S\ \den{\hbox{\zcdc{c2}}}_{\text{\zcdc{T}}}^S\hbox{\zcdc{)}}
\\ \den{\texttt{OneOf xs y c}}_{\text{\zcdc{T}}}^S &= \hbox{\zcdc{get c >>= (fun xs => kplus (foldr}}
\\ & \qquad \hbox{\zcdc{(fun v s => plus (set y v >>}}\ \den{\hbox{\zcdc{c}}}_{\text{\zcdc{T}}}^S\ \hbox{\zcdc{) s)}}
\\ & \qquad \hbox{\zcdc{(pure tt) xs))}}
\end{align*}
\cdc{Some} is simply a \cdc{kplus}~(from \cdc{Repeatedly}). \cdc{Or} is
a \cdc{plus}~(from \cdc{Nondeterministically}) wrapped inside
a \cdc{kplus}. \cdc{OneOf xs y c} is a bit complicated: we first
use \cdc{get}, an effectful \cdc{MemoryEff} operation that retrieves a
value from the reference \cdc{xs} in the memory, to get a list, which
we also call \cdc{xs} and it shadows the other \cdc{xs}; we then fold
the list nondeterministically many times using \cdc{plus}
over \cdc{xs}; each operand joined by a \cdc{plus} is a \cdc{set y v},
an effectful \cdc{MemoryEff} operation that set the value \cdc{v} in
the reference \cdc{y} in the memory, followed by the embedding of
command \cdc{c}.
\else
We use $\den{\cdot}_{\text{\cdc{T}}}^I$ to denote \socketnet's \logic{}
in \cdc{T} and $\den{\cdot}_{\text{\cdc{T}}}^S$ to
denote \socketnetspec's \logic{} in \cdc{T}. For the sake of space, we omit
the embeddings here.
\fi

\begin{figure}[t]
\begin{subfigure}[t]{0.3\textwidth}
\begin{lstlisting}[language=Coq,mathescape=true]
Definition L1 :=
  $A$ ;;
  FOR y IN conns DO
    $B$ ;;
    $C$ ;;
  END.
\end{lstlisting}
\end{subfigure}
\begin{subfigure}[t]{0.3\textwidth}
\begin{lstlisting}[language=Coq,mathescape=true]
Definition L2 :=
  $A$ ;;
  OneOf (conns) y
    ($B$ ;; $C$).
\end{lstlisting}
\end{subfigure}
\begin{subfigure}[t]{0.3\textwidth}
\begin{lstlisting}[language=Coq,mathescape=true]
Definition L3 :=
  $A$ ;;
  OneOf (conns) y
    (Or $B$ $C$).
\end{lstlisting}
\end{subfigure}
\caption{Program \cdc{L1} written in \socketnet, and programs \cdc{L2} and \cdc{L3} written in \socketnetspec.}\label{fig:layers}
\end{figure}

We would like to show that
$\den{\texttt{Impl}}_{\text{\cdc{T}}}^I \subseteq \den{\texttt{Spec}}_{\text{\cdc{T}}}^S$.
Recall that $\subseteq$ is the refinement relation
on \structs~(\cref{subsec:addons}). The theorem states that the \logic{} of our
implementation \texttt{Impl} in \cdc{T} refines the \logic{} of our
specification \texttt{Spec} in \cdc{T}.

To show that, we first observe that \texttt{Impl} and \texttt{Spec} share some
common program fragments, \eg~lines 1--6 of \texttt{Impl} are the same as lines
2--7 of \texttt{Spec}. Indeed, there are three such common fragments and we name
them $A$~(lines 1--6 of \texttt{Impl}), $B$~(lines 8--11 of \texttt{Impl}), and
$C$~(lines 12--20 of \texttt{Impl}), respectively. We then define three
programs \cdc{L1}, \cdc{L2}, and \cdc{L3} shown in \cref{fig:layers}. These
programs represent some intermediate layers between \texttt{Impl}
and \texttt{Spec}. We prove the following theorem:
\begin{theorem}\label{thm:refinement}
$\den{\texttt{Impl}}_{\text{\cdc{T}}}^I \subseteq \den{\texttt{L1}}_{\text{\cdc{T}}}^I \subseteq \den{\texttt{L2}}_{\text{\cdc{T}}}^S \subseteq \den{\texttt{L3}}_{\text{\cdc{T}}}^S \subseteq \den{\texttt{Spec}}_{\text{\cdc{T}}}^S$.
\end{theorem}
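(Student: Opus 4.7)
The plan is to prove the chain by establishing each of the four individual refinements separately and then composing them via transitivity of $\subseteq$. The four obligations are: (i) $\den{\texttt{Impl}}_{\text{\cdc{T}}}^I \subseteq \den{\texttt{L1}}_{\text{\cdc{T}}}^I$, (ii) $\den{\texttt{L1}}_{\text{\cdc{T}}}^I \subseteq \den{\texttt{L2}}_{\text{\cdc{T}}}^S$, (iii) $\den{\texttt{L2}}_{\text{\cdc{T}}}^S \subseteq \den{\texttt{L3}}_{\text{\cdc{T}}}^S$, and (iv) $\den{\texttt{L3}}_{\text{\cdc{T}}}^S \subseteq \den{\texttt{Spec}}_{\text{\cdc{T}}}^S$. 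Each step corresponds to a distinct kind of abstraction: (i) unfolding of the named fragments $A$, $B$, $C$; (ii) replacing deterministic iteration by nondeterministic choice over a list; (iii) widening a fixed sequence into a disjunctive repetition; and (iv) wrapping the whole body in an outer \cdc{Some}.

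Step (i) is essentially definitional. Since $A$, $B$, $C$ are only names for three syntactic fragments of \texttt{Impl}, both \logics{} unfold to the same element of \cdc{T}, and the refinement follows from the reflexivity of $\subseteq$. For step (iii), note that under the embedding, \cdc{Or B C} reduces to $\hbox{\cdc{kplus (plus}}\ B\ C\hbox{\cdc{)}}$. Using \textsc{Left Plus} and \textsc{Right Plus} from \cref{fig:addon-class} I get $B \subseteq \hbox{\cdc{plus}}\ B\ C$ and $C \subseteq \hbox{\cdc{plus}}\ B\ C$, respectively. Applying congruence of bind with respect to $\subseteq$ then yields $B\texttt{;;}C \subseteq (\hbox{\cdc{plus}}\ B\ C)\texttt{;;}(\hbox{\cdc{plus}}\ B\ C) = \hbox{\cdc{repeat (plus}}\ B\ C\hbox{\cdc{)}}\ 2$, which refines $\hbox{\cdc{kplus (plus}}\ B\ C\hbox{\cdc{)}}$ by \textsc{Repeat}. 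Step (iv) is structurally similar: by \textsc{Left Plus} and \textsc{Right Plus}, both $A$ and \cdc{OneOf conns y (Or B C)} refine their \cdc{plus}, and then \textsc{Repeat} with $n=2$ places the two-step sequence inside the outer \cdc{kplus} induced by \cdc{Some}. In both steps, congruence of $\subseteq$ under bind is what lets me propagate these pointwise refinements through the surrounding program context.

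Step (ii) is where the real work lies. After discharging the common $A$ prefix by congruence of bind, I must show that the deterministic iteration \texttt{FOR y IN conns DO B;;C END} refines the embedding of \texttt{OneOf conns y (B;;C)}, which unfolds to a \cdc{get} on \cdc{conns} followed by $\hbox{\cdc{kplus}}$ of a \cdc{foldr} that combines the per-element commands with \cdc{plus} over a \cdc{pure tt} base. I would first align the two \cdc{get}s on \cdc{conns} via congruence, and then proceed by induction on the fetched list. The base case reduces to $\hbox{\cdc{pure tt}} \subseteq \hbox{\cdc{kplus (pure tt)}}$, which is \textsc{Repeat} with $n=1$. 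The inductive case is the hard part: I expect to need an auxiliary lemma stating that for any $a$ and $s$, $(a\texttt{;;}s) \subseteq \hbox{\cdc{kplus (plus}}\ a\ s\hbox{\cdc{)}}$ whenever $s \subseteq \hbox{\cdc{kplus}}\ s'$ for a related $s'$, so that a single head execution followed by an already-refined tail can be absorbed into one outer \cdc{kplus}. Proving this absorption lemma will rely on \textsc{Kplus} for transitivity of \cdc{kplus}, \textsc{Left Plus}/\textsc{Right Plus} to select alternatives, and \textsc{Repeat} to unfold finite iterations. Packaging the right monotonicity principles for bind under $\subseteq$, so that these rules compose cleanly through the Coq encoding of $\hbox{\cdc{Fix1}}$, is where I expect most of the mechanization effort to go.
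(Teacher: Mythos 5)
Your overall decomposition---four refinements discharged separately and chained by transitivity, with \textsc{Repeat}, \textsc{Kplus}, and \textsc{Left Plus}/\textsc{Right Plus} plus bind-monotonicity doing the work in the later steps, and an induction over the fetched list for the \cdc{OneOf} step---matches the paper's proof. However, your treatment of step (i) contains a genuine error. You claim $\den{\texttt{Impl}}_{\text{\cdc{T}}}^I \subseteq \den{\texttt{L1}}_{\text{\cdc{T}}}^I$ is ``essentially definitional'' and follows from reflexivity because $A$, $B$, $C$ are mere names for syntactic fragments. That is precisely what fails here: the \structtypes{} are \emph{reified}, not normalized, so \cdc{Bind} records the full association tree of the program. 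Embedding \texttt{Impl} as a flat sequence of statements and embedding \cdc{L1} as the block $A$ followed by the loop yield two \emph{different} trees of \cdc{Bind} nodes---essentially \cdc{m >>= (fun x => g x >>= h)} versus \cdc{(m >>= g) >>= h}---which are not equal elements of \cdc{T}. The paper proves this step by the associativity law in the theory of \cdc{Dynamically}; reflexivity does not suffice. This is not a bookkeeping detail: the refusal to normalize bind associativity is exactly what distinguishes reified structures from freer monads in this paper, so a proof that treats regrouping as a no-op is using the wrong notion of equality on \cdc{T}.

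A smaller gap: step (iii) also requires an induction over \cdc{conns}, not just congruence of bind. The embedding of \cdc{OneOf} unfolds to a \cdc{foldr} over the fetched list joining the per-element bodies with \cdc{plus}, so pushing the pointwise refinement of $B\texttt{;;}C$ into \cdc{Or} $B$ $C$ through that context is itself a list induction; the paper makes this explicit for both (ii) and (iii). With these two repairs your argument lines up with the paper's.
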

\begin{proof}
We show
$\den{\texttt{Impl}}_{\text{\cdc{T}}}^I \subseteq \den{\texttt{L1}}_{\text{\cdc{T}}}^I$
by associativity of \cdc{Dynamically}. Both
$\den{\texttt{L1}}_{\text{\cdc{T}}}^I \subseteq \den{\texttt{L2}}_{\text{\cdc{T}}}^S$
and
$\den{\texttt{L2}}_{\text{\cdc{T}}}^S \subseteq \den{\texttt{L3}}_{\text{\cdc{T}}}^S$
can be proven by an induction over \texttt{conns} and with the help of theories
of \cdc{Dynamically}, \cdc{Repeatedly} and \cdc{Nondeterministically}. Finally,
we prove
$\den{\texttt{L3}}_{\text{\cdc{T}}}^S \subseteq \den{\texttt{Spec}}_{\text{\cdc{T}}}^S$
by the theories of \cdc{Dynamically}, \cdc{Repeatedly},
and \cdc{Nondeterministically}.
\end{proof}

Interested readers can find the full Coq implementation
of \socketnet, \socketnetspec, the \logics{} of these two languages, the
implementation \texttt{Impl}, the specification \texttt{Spec}, as well as the
full proof of \cref{thm:refinement} in our supplementary
artifact~\citep{tlon-artifact}.

\section{Discussion}\label{sec:discussion}\label{sec:compare-freer}

\paragraph{The expression problem}

The composable \structs{} require extensible inductive types. We implement this
feature in Coq by using the Church encodings of datatypes, following the
precedent work of MTC~\citep{mtc}. There are several consequences of using
Church encodings instead of Coq's original inductive datatypes.

First, we cannot make use of Coq's language mechanisms, libraries, and plugins
that make use of Coq's inductive types (\eg~Coq's builtin induction principle
generator, the Equations plugin~\citep{coq-equations}, the QuickChick
plugin~\citep{gen-inductive, quickchick-ind},~\etc). Furthermore, the extra
implementation overheads incurred by Church encodings~(\eg~proving an algebra is
a functor, proving the induction principle using dependent types,~\etc) can be
huge. However, this situation can be helped by developing tools or plugins for
supporting Church encodings.

The other consequence is that, following the practice of MTC, we use Coq's
impredicative set extension. This causes two problems: (1)~Certain types cannot
inhabit \cdc{Set}, and (2)~our Coq development is inconsistent with certain set
of axioms such as the axiom of unique choice together with the law of excluded
middle, as we have discussed in \cref{subsec:coq-impl}.

There are alternative methods for addressing the expression problem. One option
is the meta-programming approach proposed by \citet{coq-a-la-carte}. In this
approach, we can define each composable adverb separately in a meta language and
use a language plugin to generate a combined definition in Coq. This approach
does not fully address the expression problem as extending the combined
definition requires recompilation---but the amount of code that needs to be
recompiled is much smaller and the generated code uses Coq's builtin inductive
types. Another option that has recently been explored
by \citet{family-poly-for-proofs} is adding \emph{family
polymorphism}~\citep{family-poly} to theorem provers. These works are promising.
Unfortunately, they either lack mature tool support or is still in development
at the moment. We would like to explore these approaches in the future and
composable program adverbs might provide a good application to these approaches.

\paragraph{Reified vs. free structures}

Even though the reified structures used in \structtypes{} are free structures,
they are different from those free structures present in~\citet{freer, free-ap,
selective, selective-impl}. The biggest difference between reified structures
and these free structures are the parameters they recurse on: all the reified
structures recurse on both their computational parameters, while each free
structure only recurses on one of them.\footnote{With the exception of
reified/free functors, since each of them has only one computational parameters
to be recursed on.} For example, comparing \cdc{FreerMonad} in \cref{fig:freer}
and \cdc{ReifiedMonad} in \cref{fig:advs}: \cdc{FreerMonad} only recurses on the
parameter \cdc{k} of \cdc{Bind}, while \cdc{ReifiedMonad} recurses on both
parameters \cdc{m} and \cdc{k}. This means that a free structure does not just
reify a class of functors, it also converts the reification to a left- or
right-associative normal form.

One advantage of the normal forms in free structure definitions is that the type
class laws can be automatically derived from definitional equality (with the
help of the axiom of functional extensionality). However, this conversion would
eliminate some differences in the syntax. Taking \cdc{ReifiedApp} as an example,
normalizing it would result in a ``list'' rather than a ``binary tree'', making
analyzing the depth of the tree impossible. Preserving the original tree
structure of \cdc{StaticallyInParallel} also plays a crucial role in our
examples shown in \cref{sec:applicative-embedding} and~\ref{subsec:haxl-study}.

\ifextended
Another note is that the commonly used definition of free monads in Haskell
cannot be encoded in Coq because it is not strictly
positive~\citep{one-monad}. The common ways to work around this problem are:
(1)~using containers~\citep{one-monad}, or (2)~using their \emph{freer}
variants~\citep{freer, itree, mcbride-free, steelcore}.
\fi

\section{Related work}\label{sec:related-work}

\paragraph{Semantic embeddings}

There are various works that study different semantic
embeddings. \Citet{embedding} are the pioneers who coined terms such as semantic
embeddings, shallow embeddings, and deep embeddings. It is known that there are
many styles of embeddings between shallow and deep embeddings, but there is not
an agreed term on describing them. In this paper, we use the term \emph{mixed
embeddings}, which is borrowed from \citet{adam-binder}, where it is used to
describe an embedding based on freer monads. Another term \emph{deeper shallow
embeddings} is proposed by \citet{deeper-shallow}, which shows a way of
deepening any shallow embedding.

\paragraph{Freer Monads and Variants}

Freer monads~\citep{freer} and their variants are studied by many researchers in
formal verification to reason about programs with effects. Earlier work includes
the study of the \emph{delay monad}~\citep{delay} and \emph{resumption
monads}~\citep{resumption}. More recent work includes \citet{freespec2}, where
the authors use free monads to develop a modular verification framework based on
effects and effect handlers called FreeSpec.\@
\Citet{verify-effectful-haskell-in-coq} develop a framework based on free monads
and containers~\citep{containers-cat} for reasoning about Haskell programs with
effects. \Citet{pt-semantics} interpret free monads into a predicate transformer
semantics that is similar to Dijkstra monads; \Citet{reaching-star} interprets
free monads using separation logic.

On the \emph{coinductive} side, \Citet{itree} develop a coinductive variant of
freer monads called \emph{interaction trees} that can be used to reason about
general recursions and nonterminating programs. \Citet{itree-server} encode
interaction trees in VST~\citep{vst} to reason about networked
servers. \Citet{vst-certikos-connection} use interaction trees as a lingua
franca to interface and compose higher-order separation logic in VST and a
first-order verified operating system called
CertiKOS~\citep{certikos}. \Citet{gpaco} propose a technique called generalized
parameterized coinduction for developing equational theory for reasoning about
interaction trees. \Citet{vellvm-itree} use interaction trees to define a
modular, compositional, and executable semantics for
LLVM.\@ \Citet{itree-layerd} further extend the modularity of interaction trees
by extending them with layered monadic interpreters. \Citet{dijkstra-forever}
connect interaction trees with Dijkstra monads~\citep{maillard2019} for writing
termination sensitive specifications based on uninterpreted effects. \Citet{c4}
use interaction trees to verify transactional objects.  \Citet{itree-hol} apply
interaction trees to Isabelle/HOL to produce a verification and simulation
framework for state-rich process languages, which is used
by \citeauthor{robochart-itree} to give an operational semantics to RoboChart, a
timed and probabilistic domain-specific language for
robotics~\citep{robochart-itree}.

Among many variants of freer monads, one particular structure closely
resembles \structs. That is the action trees defined in \citet{steelcore}. The
action trees have four constructors, \cdc{Act}, \cdc{Ret}, \cdc{Par},
and \cdc{Bind}, whose types correspond to
effects, \cdc{ReifiedPure}, \cdc{ReifiedApp}, and \cdc{ReifiedMonad} in
composable \structs, respectively, another evidence that \structs{} are general
models. In contrast to our work, compositionality and extensibility of
``adverbs'' are not the main issue action trees try to address, so action trees
are not built in a composable way. On the other hand, action trees are embedded
with separation logic assertions, which are not the focus of \structs{}
or \logics.

\paragraph{Other Free Structures}

Other free structures are also explored by various works. \Citet{free-ap}
propose two variants of freer applicative functors, which correspond to the
left- and right-associative variants, respectively. \Citet{free-ap-coq} explores
defining freer applicative functors in Coq, and points out that the right
associative variant is harder to define in Coq. \Citet{free-monoidal} discusses
deriving free monoidal functors.\ifextended\footnote{Monoidal functors are
equivalent to applicative functors, so they also correspond to
the \cdc{Statically} adverb.}\fi \Citet{selective-impl} defines the free
selective functors.

\ifextended
In the context of formal reasoning, one of the main inspirations of our work
is \citet{free-ap}. They observe that the structures of freer monads are not
amenable to static reasoning and propose freer applicative functors. Our work
takes the observation further and identifies a class of \structs~(and
composable adverbs).
\fi

\paragraph{Programming Abstractions}

We are not the first to observe that monads are too dynamic for certain
applications. For example, \citet{non-monadic-parser} identify that a parser
that has some static features cannot be defined as a monad. Inspired by their
observation, \citet{arrows} proposes a new abstract interface called arrows. The
relationship among arrows, applicative functors, monads are studied
by \citet{idioms-arrows-monads}. \Citet{selective-parser} observe that monads
generate dynamic structures that are hard to optimize. They further show that,
by using applicative and selective functors instead, it is possible to implement
staged parser combinators that generate efficient parsers.
\ifextended
\Citet{build} observe that the datatype of tasks in a build
system~(called \cdh{Task} in their paper) can be parameterized by a class
constraint to describe various kinds of build tasks.  For example, a \cdh{Task
Applicative} describes tasks whose dependencies are determined \emph{statically}
without running the task; and a \cdc{Task Monad} describes tasks
with \emph{dynamic} dependencies.
\else
\Citet{build} observe that the datatype of tasks in a build
system can be parameterized by a class constraint to describe various kinds of
build tasks.
\fi

\section{Conclusion}

In this paper, we compare different styles of semantic embeddings and how they
impact formal reasoning about programs with effects. We find that, if used
properly, mixed embeddings can combine benefits of both shallow and deep
embeddings, and be effective in (1)~preserving syntactic structures of original
programs, (2)~showing general properties that can be proved without assumptions
over external environment, and (3)~reasoning about properties in specialized
semantic domains.

We propose \emph{\structs} and \emph{\logics}, a class of structures and a style
of mixed embeddings based on these structures, that enable us to reap these
benefits. Like free monads, \structs{} embed pure computations shallowly and
effects deeply~(and abstractly, but can later be interpreted). However,
various \structs{} correspond to alternative computation patterns, and can be
composed to model programs with multiple characteristics.

Based on \structs, \logics{} cover a wide range of programs and allow us to
reason about syntactic properties, semantic properties, and general semantic
properties with no assumption over external environment within the same
embedding.

\section*{Acknowledgment}
We thank the anonymous ICFP'22 reviewers, as well as the POPL'22 and ESOP'22
reviewers, whose valuable feedback helped improve this paper. We thank various
researchers for their feedback during their discussions with the authors,
including Stephanie Balzer, Conal Elliott, Yannick Forster, Paul He, Apoorv
Ingle, Ende Jin, Konstantinos Kallas, Anastasiya Kravchuk-Kirilyuk, Andrey
Mokhov, Benjamin C. Pierce, Nick Rioux, Andre Scedrov, Antal Spector-Zabusky,
Caleb Stanford, Kathrin Stark, Nikhil Swamy, Val Tannen, Steve Zdancewic, Weixin
Zhang, and Yizhou Zhang,~\etc\@ We thank the anonymous ICFP'22 artifact
reviewers for their comments and suggestions that helped improve the
artifact. We thank Paolo Giarrusso (on CoqClub Zulipchat), Li-yao Xia, and Irene
Yoon for helping with some Coq issues the authors encountered when developing
the artifact.

This material is based upon work supported by
the \grantsponsor{GS100000001}{National Science
Foundation}{http://dx.doi.org/10.13039/100000001} under Grant
No.~\grantnum{GS100000001}{1521539}, Grant No.~\grantnum{GS100000001}{1703835},
and Grant No.~\grantnum{GS100000001}{2006535}.  Any opinions, findings, and
conclusions or recommendations expressed in this material are those of the
authors and do not necessarily reflect the views of the National Science
Foundation.

\bibliography{ref}


\begin{thebibliography}{65}


\ifx \showCODEN    \undefined \def \showCODEN     #1{\unskip}     \fi
\ifx \showDOI      \undefined \def \showDOI       #1{#1}\fi
\ifx \showISBNx    \undefined \def \showISBNx     #1{\unskip}     \fi
\ifx \showISBNxiii \undefined \def \showISBNxiii  #1{\unskip}     \fi
\ifx \showISSN     \undefined \def \showISSN      #1{\unskip}     \fi
\ifx \showLCCN     \undefined \def \showLCCN      #1{\unskip}     \fi
\ifx \shownote     \undefined \def \shownote      #1{#1}          \fi
\ifx \showarticletitle \undefined \def \showarticletitle #1{#1}   \fi
\ifx \showURL      \undefined \def \showURL       {\relax}        \fi
\providecommand\bibfield[2]{#2}
\providecommand\bibinfo[2]{#2}
\providecommand\natexlab[1]{#1}
\providecommand\showeprint[2][]{arXiv:#2}

\bibitem[Abbott et~al\mbox{.}(2003)]%
        {containers-cat}
\bibfield{author}{\bibinfo{person}{Michael~Gordon Abbott},
  \bibinfo{person}{Thorsten Altenkirch}, {and} \bibinfo{person}{Neil Ghani}.}
  \bibinfo{year}{2003}\natexlab{}.
\newblock \showarticletitle{Categories of Containers}. In
  \bibinfo{booktitle}{\emph{Foundations of Software Science and Computational
  Structures, 6th International Conference, {FOSSACS} 2003 Held as Part of the
  Joint European Conference on Theory and Practice of Software, {ETAPS} 2003,
  Warsaw, Poland, April 7-11, 2003, Proceedings}}
  \emph{(\bibinfo{series}{Lecture Notes in Computer Science},
  Vol.~\bibinfo{volume}{2620})}, \bibfield{editor}{\bibinfo{person}{Andrew~D.
  Gordon}} (Ed.). \bibinfo{publisher}{Springer}, \bibinfo{pages}{23--38}.
\newblock
\urldef\tempurl%
\url{https://doi.org/10.1007/3-540-36576-1_2}
\showDOI{\tempurl}


\bibitem[Ahman and Uustalu(2013)]%
        {update}
\bibfield{author}{\bibinfo{person}{Danel Ahman} {and} \bibinfo{person}{Tarmo
  Uustalu}.} \bibinfo{year}{2013}\natexlab{}.
\newblock \showarticletitle{Update Monads: Cointerpreting Directed Containers}.
  In \bibinfo{booktitle}{\emph{19th International Conference on Types for
  Proofs and Programs, {TYPES} 2013, April 22-26, 2013, Toulouse, France}}
  \emph{(\bibinfo{series}{LIPIcs}, Vol.~\bibinfo{volume}{26})},
  \bibfield{editor}{\bibinfo{person}{Ralph Matthes} {and}
  \bibinfo{person}{Aleksy Schubert}} (Eds.). \bibinfo{publisher}{Schloss
  Dagstuhl - Leibniz-Zentrum f{\"{u}}r Informatik}, \bibinfo{pages}{1--23}.
\newblock
\urldef\tempurl%
\url{https://doi.org/10.4230/LIPIcs.TYPES.2013.1}
\showDOI{\tempurl}


\bibitem[Appel et~al\mbox{.}(2014)]%
        {vst}
\bibfield{author}{\bibinfo{person}{Andrew~W. Appel}, \bibinfo{person}{Robert
  Dockins}, \bibinfo{person}{Aquinas Hobor}, \bibinfo{person}{Lennart
  Beringer}, \bibinfo{person}{Josiah Dodds}, \bibinfo{person}{Gordon Stewart},
  \bibinfo{person}{Sandrine Blazy}, {and} \bibinfo{person}{Xavier Leroy}.}
  \bibinfo{year}{2014}\natexlab{}.
\newblock \bibinfo{booktitle}{\emph{Program Logics for Certified Compilers}}.
\newblock \bibinfo{publisher}{Cambridge University Press}.
\newblock
\showISBNx{978-1-10-704801-0}


\bibitem[Aydemir et~al\mbox{.}(2005)]%
        {poplmark}
\bibfield{author}{\bibinfo{person}{Brian~E. Aydemir}, \bibinfo{person}{Aaron
  Bohannon}, \bibinfo{person}{Matthew Fairbairn}, \bibinfo{person}{J.~Nathan
  Foster}, \bibinfo{person}{Benjamin~C. Pierce}, \bibinfo{person}{Peter
  Sewell}, \bibinfo{person}{Dimitrios Vytiniotis}, \bibinfo{person}{Geoffrey
  Washburn}, \bibinfo{person}{Stephanie Weirich}, {and} \bibinfo{person}{Steve
  Zdancewic}.} \bibinfo{year}{2005}\natexlab{}.
\newblock \showarticletitle{Mechanized Metatheory for the Masses: The POPLMARK
  Challenge}. In \bibinfo{booktitle}{\emph{Theorem Proving in Higher Order
  Logics, 18th International Conference, TPHOLs 2005, Oxford, UK, August 22-25,
  2005, Proceedings}}. \bibinfo{pages}{50--65}.
\newblock
\urldef\tempurl%
\url{https://doi.org/10.1007/11541868\_4}
\showDOI{\tempurl}


\bibitem[Borges(1940)]%
        {tlon}
\bibfield{author}{\bibinfo{person}{Jorge~Luis Borges}.}
  \bibinfo{year}{1940}\natexlab{}.
\newblock \showarticletitle{Tl\"on, {U}qbar, {O}rbis {T}ertius}.
\newblock In \bibinfo{booktitle}{\emph{Labyrinths: Selected Stories \& Other
  Writings}}, \bibfield{editor}{\bibinfo{person}{Donald~A. Yates} {and}
  \bibinfo{person}{James~E. Irby}} (Eds.).
\newblock
\newblock
\shownote{The story was translated by James E. Irby. It first appeared in
  \emph{New World Writing} No. 18, 1961. The book was first published in 1962}.


\bibitem[Boulton et~al\mbox{.}(1992)]%
        {embedding}
\bibfield{author}{\bibinfo{person}{Richard~J. Boulton},
  \bibinfo{person}{Andrew~D. Gordon}, \bibinfo{person}{Michael J.~C. Gordon},
  \bibinfo{person}{John Harrison}, \bibinfo{person}{John Herbert}, {and}
  \bibinfo{person}{John~Van Tassel}.} \bibinfo{year}{1992}\natexlab{}.
\newblock \showarticletitle{Experience with Embedding Hardware Description
  Languages in {HOL}}. In \bibinfo{booktitle}{\emph{Theorem Provers in Circuit
  Design, Proceedings of the {IFIP} {TC10/WG} 10.2 International Conference on
  Theorem Provers in Circuit Design: Theory, Practice and Experience, Nijmegen,
  The Netherlands, 22-24 June 1992, Proceedings}}
  \emph{(\bibinfo{series}{{IFIP} Transactions},
  Vol.~\bibinfo{volume}{{A-10}})}, \bibfield{editor}{\bibinfo{person}{Victoria
  Stavridou}, \bibinfo{person}{Thomas~F. Melham}, {and}
  \bibinfo{person}{Raymond~T. Boute}} (Eds.).
  \bibinfo{publisher}{North-Holland}, \bibinfo{pages}{129--156}.
\newblock


\bibitem[Breitner et~al\mbox{.}(2021)]%
        {containers}
\bibfield{author}{\bibinfo{person}{Joachim Breitner}, \bibinfo{person}{Antal
  Spector{-}Zabusky}, \bibinfo{person}{Yao Li}, \bibinfo{person}{Christine
  Rizkallah}, \bibinfo{person}{John Wiegley}, \bibinfo{person}{Joshua Cohen},
  {and} \bibinfo{person}{Stephanie Weirich}.} \bibinfo{year}{2021}\natexlab{}.
\newblock \showarticletitle{Ready, Set, Verify! Applying hs-to-coq to
  real-world Haskell code}.
\newblock \bibinfo{journal}{\emph{Journal of Functional Programming}}
  \bibinfo{volume}{31} (\bibinfo{year}{2021}), \bibinfo{pages}{e5}.
\newblock
\urldef\tempurl%
\url{https://doi.org/10.1017/S0956796820000283}
\showDOI{\tempurl}


\bibitem[Capretta(2005)]%
        {delay}
\bibfield{author}{\bibinfo{person}{Venanzio Capretta}.}
  \bibinfo{year}{2005}\natexlab{}.
\newblock \showarticletitle{General recursion via coinductive types}.
\newblock \bibinfo{journal}{\emph{Log. Methods Comput. Sci.}}
  \bibinfo{volume}{1}, \bibinfo{number}{2} (\bibinfo{year}{2005}).
\newblock
\urldef\tempurl%
\url{https://doi.org/10.2168/LMCS-1(2:1)2005}
\showDOI{\tempurl}


\bibitem[Capriotti and Kaposi(2014)]%
        {free-ap}
\bibfield{author}{\bibinfo{person}{Paolo Capriotti} {and}
  \bibinfo{person}{Ambrus Kaposi}.} \bibinfo{year}{2014}\natexlab{}.
\newblock \showarticletitle{Free Applicative Functors}. In
  \bibinfo{booktitle}{\emph{Proceedings 5th Workshop on Mathematically
  Structured Functional Programming, MSFP@ETAPS 2014, Grenoble, France, 12
  April 2014}} \emph{(\bibinfo{series}{{EPTCS}}, Vol.~\bibinfo{volume}{153})},
  \bibfield{editor}{\bibinfo{person}{Paul Levy} {and} \bibinfo{person}{Neel
  Krishnaswami}} (Eds.). \bibinfo{pages}{2--30}.
\newblock
\urldef\tempurl%
\url{https://doi.org/10.4204/EPTCS.153.2}
\showDOI{\tempurl}


\bibitem[Chlipala(2021)]%
        {adam-binder}
\bibfield{author}{\bibinfo{person}{Adam Chlipala}.}
  \bibinfo{year}{2021}\natexlab{}.
\newblock \showarticletitle{Skipping the binder bureaucracy with mixed
  embeddings in a semantics course (functional pearl)}.
\newblock \bibinfo{journal}{\emph{Proc. {ACM} Program. Lang.}}
  \bibinfo{volume}{5}, \bibinfo{number}{{ICFP}} (\bibinfo{year}{2021}),
  \bibinfo{pages}{1--28}.
\newblock
\urldef\tempurl%
\url{https://doi.org/10.1145/3473599}
\showDOI{\tempurl}


\bibitem[Christiansen et~al\mbox{.}(2019)]%
        {verify-effectful-haskell-in-coq}
\bibfield{author}{\bibinfo{person}{Jan Christiansen}, \bibinfo{person}{Sandra
  Dylus}, {and} \bibinfo{person}{Niels Bunkenburg}.}
  \bibinfo{year}{2019}\natexlab{}.
\newblock \showarticletitle{Verifying effectful {Haskell} programs in {Coq}}.
  In \bibinfo{booktitle}{\emph{Proceedings of the 12th {ACM} {SIGPLAN}
  International Symposium on Haskell, Haskell@ICFP 2019, Berlin, Germany,
  August 18-23, 2019}}, \bibfield{editor}{\bibinfo{person}{Richard~A.
  Eisenberg}} (Ed.). \bibinfo{publisher}{{ACM}}, \bibinfo{pages}{125--138}.
\newblock
\urldef\tempurl%
\url{https://doi.org/10.1145/3331545.3342592}
\showDOI{\tempurl}


\bibitem[d.~S.~Oliveira(2009)]%
        {modular-visitor}
\bibfield{author}{\bibinfo{person}{Bruno~C. d. S.~Oliveira}.}
  \bibinfo{year}{2009}\natexlab{}.
\newblock \showarticletitle{Modular Visitor Components}. In
  \bibinfo{booktitle}{\emph{{ECOOP} 2009 - Object-Oriented Programming, 23rd
  European Conference, Genoa, Italy, July 6-10, 2009. Proceedings}}
  \emph{(\bibinfo{series}{Lecture Notes in Computer Science},
  Vol.~\bibinfo{volume}{5653})}, \bibfield{editor}{\bibinfo{person}{Sophia
  Drossopoulou}} (Ed.). \bibinfo{publisher}{Springer},
  \bibinfo{pages}{269--293}.
\newblock
\urldef\tempurl%
\url{https://doi.org/10.1007/978-3-642-03013-0\_13}
\showDOI{\tempurl}


\bibitem[Delaware et~al\mbox{.}(2013)]%
        {mtc}
\bibfield{author}{\bibinfo{person}{Benjamin Delaware},
  \bibinfo{person}{Bruno~C. d. S.~Oliveira}, {and} \bibinfo{person}{Tom
  Schrijvers}.} \bibinfo{year}{2013}\natexlab{}.
\newblock \showarticletitle{Meta-theory {\`{a}} la carte}. In
  \bibinfo{booktitle}{\emph{The 40th Annual {ACM} {SIGPLAN-SIGACT} Symposium on
  Principles of Programming Languages, {POPL} '13, Rome, Italy - January 23 -
  25, 2013}}, \bibfield{editor}{\bibinfo{person}{Roberto Giacobazzi} {and}
  \bibinfo{person}{Radhia Cousot}} (Eds.). \bibinfo{publisher}{{ACM}},
  \bibinfo{pages}{207--218}.
\newblock
\urldef\tempurl%
\url{https://doi.org/10.1145/2429069.2429094}
\showDOI{\tempurl}


\bibitem[Dylus et~al\mbox{.}(2019)]%
        {one-monad}
\bibfield{author}{\bibinfo{person}{Sandra Dylus}, \bibinfo{person}{Jan
  Christiansen}, {and} \bibinfo{person}{Finn Teegen}.}
  \bibinfo{year}{2019}\natexlab{}.
\newblock \showarticletitle{One Monad to Prove Them All}.
\newblock \bibinfo{journal}{\emph{Art Sci. Eng. Program.}} \bibinfo{volume}{3},
  \bibinfo{number}{3} (\bibinfo{year}{2019}), \bibinfo{pages}{8}.
\newblock
\urldef\tempurl%
\url{https://doi.org/10.22152/programming-journal.org/2019/3/8}
\showDOI{\tempurl}


\bibitem[Ernst(2001)]%
        {family-poly}
\bibfield{author}{\bibinfo{person}{Erik Ernst}.}
  \bibinfo{year}{2001}\natexlab{}.
\newblock \showarticletitle{Family Polymorphism}. In
  \bibinfo{booktitle}{\emph{{ECOOP} 2001 - Object-Oriented Programming, 15th
  European Conference, Budapest, Hungary, June 18-22, 2001, Proceedings}}
  \emph{(\bibinfo{series}{Lecture Notes in Computer Science},
  Vol.~\bibinfo{volume}{2072})},
  \bibfield{editor}{\bibinfo{person}{J{\o}rgen~Lindskov Knudsen}} (Ed.).
  \bibinfo{publisher}{Springer}, \bibinfo{pages}{303--326}.
\newblock
\urldef\tempurl%
\url{https://doi.org/10.1007/3-540-45337-7\_17}
\showDOI{\tempurl}


\bibitem[Forster and Stark(2020)]%
        {coq-a-la-carte}
\bibfield{author}{\bibinfo{person}{Yannick Forster} {and}
  \bibinfo{person}{Kathrin Stark}.} \bibinfo{year}{2020}\natexlab{}.
\newblock \showarticletitle{Coq {\`{a}} la carte: a practical approach to
  modular syntax with binders}. In \bibinfo{booktitle}{\emph{Proceedings of the
  9th {ACM} {SIGPLAN} International Conference on Certified Programs and
  Proofs, {CPP} 2020, New Orleans, LA, USA, January 20-21, 2020}},
  \bibfield{editor}{\bibinfo{person}{Jasmin Blanchette} {and}
  \bibinfo{person}{Catalin Hritcu}} (Eds.). \bibinfo{publisher}{{ACM}},
  \bibinfo{pages}{186--200}.
\newblock
\urldef\tempurl%
\url{https://doi.org/10.1145/3372885.3373817}
\showDOI{\tempurl}


\bibitem[Foster et~al\mbox{.}(2021)]%
        {itree-hol}
\bibfield{author}{\bibinfo{person}{Simon Foster}, \bibinfo{person}{Chung{-}Kil
  Hur}, {and} \bibinfo{person}{Jim Woodcock}.} \bibinfo{year}{2021}\natexlab{}.
\newblock \showarticletitle{Formally Verified Simulations of State-Rich
  Processes using Interaction Trees in Isabelle/HOL}.
\newblock \bibinfo{journal}{\emph{CoRR}}  \bibinfo{volume}{abs/2105.05133}
  (\bibinfo{year}{2021}).
\newblock
\showeprint[arxiv]{2105.05133}
\urldef\tempurl%
\url{https://arxiv.org/abs/2105.05133}
\showURL{%
\tempurl}


\bibitem[Gu et~al\mbox{.}(2019)]%
        {certikos}
\bibfield{author}{\bibinfo{person}{Ronghui Gu}, \bibinfo{person}{Zhong Shao},
  \bibinfo{person}{Hao Chen}, \bibinfo{person}{Jieung Kim},
  \bibinfo{person}{J{\'{e}}r{\'{e}}mie Koenig},
  \bibinfo{person}{Xiongnan~(Newman) Wu}, \bibinfo{person}{Vilhelm
  Sj{\"{o}}berg}, {and} \bibinfo{person}{David Costanzo}.}
  \bibinfo{year}{2019}\natexlab{}.
\newblock \showarticletitle{Building certified concurrent {OS} kernels}.
\newblock \bibinfo{journal}{\emph{Commun. {ACM}}} \bibinfo{volume}{62},
  \bibinfo{number}{10} (\bibinfo{year}{2019}), \bibinfo{pages}{89--99}.
\newblock
\urldef\tempurl%
\url{https://doi.org/10.1145/3356903}
\showDOI{\tempurl}


\bibitem[Gu et~al\mbox{.}(2018)]%
        {ccal}
\bibfield{author}{\bibinfo{person}{Ronghui Gu}, \bibinfo{person}{Zhong Shao},
  \bibinfo{person}{Jieung Kim}, \bibinfo{person}{Xiongnan~(Newman) Wu},
  \bibinfo{person}{J{\'{e}}r{\'{e}}mie Koenig}, \bibinfo{person}{Vilhelm
  Sj{\"{o}}berg}, \bibinfo{person}{Hao Chen}, \bibinfo{person}{David Costanzo},
  {and} \bibinfo{person}{Tahina Ramananandro}.}
  \bibinfo{year}{2018}\natexlab{}.
\newblock \showarticletitle{Certified concurrent abstraction layers}. In
  \bibinfo{booktitle}{\emph{Proceedings of the 39th {ACM} {SIGPLAN} Conference
  on Programming Language Design and Implementation, {PLDI} 2018, Philadelphia,
  PA, USA, June 18-22, 2018}}, \bibfield{editor}{\bibinfo{person}{Jeffrey~S.
  Foster} {and} \bibinfo{person}{Dan Grossman}} (Eds.).
  \bibinfo{publisher}{{ACM}}, \bibinfo{pages}{646--661}.
\newblock
\urldef\tempurl%
\url{https://doi.org/10.1145/3192366.3192381}
\showDOI{\tempurl}


\bibitem[Hughes(2000)]%
        {arrows}
\bibfield{author}{\bibinfo{person}{John Hughes}.}
  \bibinfo{year}{2000}\natexlab{}.
\newblock \showarticletitle{Generalising monads to arrows}.
\newblock \bibinfo{journal}{\emph{Sci. Comput. Program.}} \bibinfo{volume}{37},
  \bibinfo{number}{1-3} (\bibinfo{year}{2000}), \bibinfo{pages}{67--111}.
\newblock
\urldef\tempurl%
\url{https://doi.org/10.1016/S0167-6423(99)00023-4}
\showDOI{\tempurl}


\bibitem[Jones(1995)]%
        {mark95}
\bibfield{author}{\bibinfo{person}{Mark~P. Jones}.}
  \bibinfo{year}{1995}\natexlab{}.
\newblock \showarticletitle{Functional Programming with Overloading and
  Higher-Order Polymorphism}. In \bibinfo{booktitle}{\emph{Advanced Functional
  Programming, First International Spring School on Advanced Functional
  Programming Techniques, B{\aa}stad, Sweden, May 24-30, 1995, Tutorial Text}}
  \emph{(\bibinfo{series}{Lecture Notes in Computer Science},
  Vol.~\bibinfo{volume}{925})}, \bibfield{editor}{\bibinfo{person}{Johan
  Jeuring} {and} \bibinfo{person}{Erik Meijer}} (Eds.).
  \bibinfo{publisher}{Springer}, \bibinfo{pages}{97--136}.
\newblock
\urldef\tempurl%
\url{https://doi.org/10.1007/3-540-59451-5_4}
\showDOI{\tempurl}


\bibitem[Kiselyov and Ishii(2015)]%
        {freer}
\bibfield{author}{\bibinfo{person}{Oleg Kiselyov} {and} \bibinfo{person}{Hiromi
  Ishii}.} \bibinfo{year}{2015}\natexlab{}.
\newblock \showarticletitle{Freer monads, more extensible effects}. In
  \bibinfo{booktitle}{\emph{Proceedings of the 8th {ACM} {SIGPLAN} Symposium on
  Haskell, Haskell 2015, Vancouver, BC, Canada, September 3-4, 2015}}.
  \bibinfo{pages}{94--105}.
\newblock
\urldef\tempurl%
\url{https://doi.org/10.1145/2804302.2804319}
\showDOI{\tempurl}


\bibitem[Koh et~al\mbox{.}(2019)]%
        {itree-server}
\bibfield{author}{\bibinfo{person}{Nicolas Koh}, \bibinfo{person}{Yao Li},
  \bibinfo{person}{Yishuai Li}, \bibinfo{person}{Li{-}yao Xia},
  \bibinfo{person}{Lennart Beringer}, \bibinfo{person}{Wolf Honor{\'{e}}},
  \bibinfo{person}{William Mansky}, \bibinfo{person}{Benjamin~C. Pierce}, {and}
  \bibinfo{person}{Steve Zdancewic}.} \bibinfo{year}{2019}\natexlab{}.
\newblock \showarticletitle{From {C} to interaction trees: specifying,
  verifying, and testing a networked server}. In
  \bibinfo{booktitle}{\emph{Proceedings of the 8th {ACM} {SIGPLAN}
  International Conference on Certified Programs and Proofs, {CPP} 2019,
  Cascais, Portugal, January 14-15, 2019}},
  \bibfield{editor}{\bibinfo{person}{Assia Mahboubi} {and}
  \bibinfo{person}{Magnus~O. Myreen}} (Eds.). \bibinfo{publisher}{{ACM}},
  \bibinfo{pages}{234--248}.
\newblock
\urldef\tempurl%
\url{https://doi.org/10.1145/3293880.3294106}
\showDOI{\tempurl}


\bibitem[Kravchuk-Kirilyuk et~al\mbox{.}(2021)]%
        {family-poly-for-proofs}
\bibfield{author}{\bibinfo{person}{Anastasiya Kravchuk-Kirilyuk},
  \bibinfo{person}{Yizhou Zhang}, {and} \bibinfo{person}{Nada Amin}.}
  \bibinfo{year}{2021}\natexlab{}.
\newblock \showarticletitle{Family Polymorphism for Proof Extensibility}. In
  \bibinfo{booktitle}{\emph{Proceedings of the 27th International Conference on
  Types for Proofs and Programs, {TYPES} 2021, June 14–18, 2021}}.
\newblock
\urldef\tempurl%
\url{https://types21.liacs.nl/download/family-polymorphism-for-proof-extensibility/}
\showURL{%
\tempurl}


\bibitem[Lampropoulos et~al\mbox{.}(2018)]%
        {gen-inductive}
\bibfield{author}{\bibinfo{person}{Leonidas Lampropoulos}, \bibinfo{person}{Zoe
  Paraskevopoulou}, {and} \bibinfo{person}{Benjamin~C. Pierce}.}
  \bibinfo{year}{2018}\natexlab{}.
\newblock \showarticletitle{Generating good generators for inductive
  relations}.
\newblock \bibinfo{journal}{\emph{Proc. {ACM} Program. Lang.}}
  \bibinfo{volume}{2}, \bibinfo{number}{{POPL}} (\bibinfo{year}{2018}),
  \bibinfo{pages}{45:1--45:30}.
\newblock
\urldef\tempurl%
\url{https://doi.org/10.1145/3158133}
\showDOI{\tempurl}


\bibitem[Lesani et~al\mbox{.}(2022)]%
        {c4}
\bibfield{author}{\bibinfo{person}{Mohsen Lesani}, \bibinfo{person}{Li{-}yao
  Xia}, \bibinfo{person}{Anders Kaseorg}, \bibinfo{person}{Christian~J. Bell},
  \bibinfo{person}{Adam Chlipala}, \bibinfo{person}{Benjamin~C. Pierce}, {and}
  \bibinfo{person}{Steve Zdancewic}.} \bibinfo{year}{2022}\natexlab{}.
\newblock \showarticletitle{{C4:} verified transactional objects}.
\newblock \bibinfo{journal}{\emph{Proc. {ACM} Program. Lang.}}
  \bibinfo{volume}{6}, \bibinfo{number}{{OOPSLA}} (\bibinfo{year}{2022}),
  \bibinfo{pages}{1--31}.
\newblock
\urldef\tempurl%
\url{https://doi.org/10.1145/3527324}
\showDOI{\tempurl}


\bibitem[Letan et~al\mbox{.}(2021)]%
        {freespec2}
\bibfield{author}{\bibinfo{person}{Thomas Letan}, \bibinfo{person}{Yann
  R{\'{e}}gis{-}Gianas}, \bibinfo{person}{Pierre Chifflier}, {and}
  \bibinfo{person}{Guillaume Hiet}.} \bibinfo{year}{2021}\natexlab{}.
\newblock \showarticletitle{Modular verification of programs with effects and
  effects handlers}.
\newblock \bibinfo{journal}{\emph{Formal Aspects Comput.}}
  \bibinfo{volume}{33}, \bibinfo{number}{1} (\bibinfo{year}{2021}),
  \bibinfo{pages}{127--150}.
\newblock
\urldef\tempurl%
\url{https://doi.org/10.1007/s00165-020-00523-2}
\showDOI{\tempurl}


\bibitem[Li and Weirich(2022)]%
        {tlon-artifact}
\bibfield{author}{\bibinfo{person}{Yao Li} {and} \bibinfo{person}{Stephanie
  Weirich}.} \bibinfo{year}{2022}\natexlab{}.
\newblock \bibinfo{booktitle}{\emph{Program Adverbs and Tlön Embeddings
  (artifact)}}.
\newblock
\urldef\tempurl%
\url{https://doi.org/10.5281/zenodo.6678339}
\showDOI{\tempurl}


\bibitem[Lindley et~al\mbox{.}(2011)]%
        {idioms-arrows-monads}
\bibfield{author}{\bibinfo{person}{Sam Lindley}, \bibinfo{person}{Philip
  Wadler}, {and} \bibinfo{person}{Jeremy Yallop}.}
  \bibinfo{year}{2011}\natexlab{}.
\newblock \showarticletitle{Idioms are Oblivious, Arrows are Meticulous, Monads
  are Promiscuous}.
\newblock \bibinfo{journal}{\emph{Electron. Notes Theor. Comput. Sci.}}
  \bibinfo{volume}{229}, \bibinfo{number}{5} (\bibinfo{year}{2011}),
  \bibinfo{pages}{97--117}.
\newblock
\urldef\tempurl%
\url{https://doi.org/10.1016/j.entcs.2011.02.018}
\showDOI{\tempurl}


\bibitem[Lorch et~al\mbox{.}(2020)]%
        {armada}
\bibfield{author}{\bibinfo{person}{Jacob~R. Lorch}, \bibinfo{person}{Yixuan
  Chen}, \bibinfo{person}{Manos Kapritsos}, \bibinfo{person}{Bryan Parno},
  \bibinfo{person}{Shaz Qadeer}, \bibinfo{person}{Upamanyu Sharma},
  \bibinfo{person}{James~R. Wilcox}, {and} \bibinfo{person}{Xueyuan Zhao}.}
  \bibinfo{year}{2020}\natexlab{}.
\newblock \showarticletitle{Armada: low-effort verification of high-performance
  concurrent programs}. In \bibinfo{booktitle}{\emph{Proceedings of the 41st
  {ACM} {SIGPLAN} International Conference on Programming Language Design and
  Implementation, {PLDI} 2020, London, UK, June 15-20, 2020}},
  \bibfield{editor}{\bibinfo{person}{Alastair~F. Donaldson} {and}
  \bibinfo{person}{Emina Torlak}} (Eds.). \bibinfo{publisher}{{ACM}},
  \bibinfo{pages}{197--210}.
\newblock
\urldef\tempurl%
\url{https://doi.org/10.1145/3385412.3385971}
\showDOI{\tempurl}


\bibitem[Maillard et~al\mbox{.}(2019)]%
        {maillard2019}
\bibfield{author}{\bibinfo{person}{Kenji Maillard}, \bibinfo{person}{Danel
  Ahman}, \bibinfo{person}{Robert Atkey}, \bibinfo{person}{Guido
  Mart{\'{\i}}nez}, \bibinfo{person}{Catalin Hritcu}, \bibinfo{person}{Exequiel
  Rivas}, {and} \bibinfo{person}{{\'{E}}ric Tanter}.}
  \bibinfo{year}{2019}\natexlab{}.
\newblock \showarticletitle{Dijkstra monads for all}.
\newblock \bibinfo{journal}{\emph{Proc. {ACM} Program. Lang.}}
  \bibinfo{volume}{3}, \bibinfo{number}{{ICFP}} (\bibinfo{year}{2019}),
  \bibinfo{pages}{104:1--104:29}.
\newblock
\urldef\tempurl%
\url{https://doi.org/10.1145/3341708}
\showDOI{\tempurl}


\bibitem[Mansky et~al\mbox{.}(2020)]%
        {vst-certikos-connection}
\bibfield{author}{\bibinfo{person}{William Mansky}, \bibinfo{person}{Wolf
  Honor{\'{e}}}, {and} \bibinfo{person}{Andrew~W. Appel}.}
  \bibinfo{year}{2020}\natexlab{}.
\newblock \showarticletitle{Connecting Higher-Order Separation Logic to a
  First-Order Outside World}. In \bibinfo{booktitle}{\emph{Programming
  Languages and Systems - 29th European Symposium on Programming, {ESOP} 2020,
  Held as Part of the European Joint Conferences on Theory and Practice of
  Software, {ETAPS} 2020, Dublin, Ireland, April 25-30, 2020, Proceedings}}
  \emph{(\bibinfo{series}{Lecture Notes in Computer Science},
  Vol.~\bibinfo{volume}{12075})}, \bibfield{editor}{\bibinfo{person}{Peter
  M{\"{u}}ller}} (Ed.). \bibinfo{publisher}{Springer},
  \bibinfo{pages}{428--455}.
\newblock
\urldef\tempurl%
\url{https://doi.org/10.1007/978-3-030-44914-8\_16}
\showDOI{\tempurl}


\bibitem[Marlow et~al\mbox{.}(2014)]%
        {haxl}
\bibfield{author}{\bibinfo{person}{Simon Marlow}, \bibinfo{person}{Louis
  Brandy}, \bibinfo{person}{Jonathan Coens}, {and} \bibinfo{person}{Jon
  Purdy}.} \bibinfo{year}{2014}\natexlab{}.
\newblock \showarticletitle{There is no fork: an abstraction for efficient,
  concurrent, and concise data access}. In
  \bibinfo{booktitle}{\emph{Proceedings of the 19th {ACM} {SIGPLAN}
  international conference on Functional programming, Gothenburg, Sweden,
  September 1-3, 2014}}, \bibfield{editor}{\bibinfo{person}{Johan Jeuring}
  {and} \bibinfo{person}{Manuel M.~T. Chakravarty}} (Eds.).
  \bibinfo{publisher}{{ACM}}, \bibinfo{pages}{325--337}.
\newblock
\urldef\tempurl%
\url{https://doi.org/10.1145/2628136.2628144}
\showDOI{\tempurl}


\bibitem[\mbox{Coq development team}(2022)]%
        {coq}
\bibfield{author}{\bibinfo{person}{\mbox{Coq development team}}.}
  \bibinfo{year}{2022}\natexlab{}.
\newblock \bibinfo{booktitle}{\emph{The Coq proof assistant}}.
\newblock
\urldef\tempurl%
\url{http://coq.inria.fr}
\showURL{%
\tempurl}
\newblock
\shownote{Version 8.15.1}.


\bibitem[McBride(2015)]%
        {mcbride-free}
\bibfield{author}{\bibinfo{person}{Conor McBride}.}
  \bibinfo{year}{2015}\natexlab{}.
\newblock \showarticletitle{Turing-Completeness Totally Free}. In
  \bibinfo{booktitle}{\emph{Mathematics of Program Construction - 12th
  International Conference, {MPC} 2015, K{\"{o}}nigswinter, Germany, June 29 -
  July 1, 2015. Proceedings}}. \bibinfo{pages}{257--275}.
\newblock
\urldef\tempurl%
\url{https://doi.org/10.1007/978-3-319-19797-5\_13}
\showDOI{\tempurl}


\bibitem[McBride and Paterson(2008)]%
        {applicative}
\bibfield{author}{\bibinfo{person}{Conor McBride} {and} \bibinfo{person}{Ross
  Paterson}.} \bibinfo{year}{2008}\natexlab{}.
\newblock \showarticletitle{Applicative programming with effects}.
\newblock \bibinfo{journal}{\emph{J. Funct. Program.}} \bibinfo{volume}{18},
  \bibinfo{number}{1} (\bibinfo{year}{2008}), \bibinfo{pages}{1--13}.
\newblock
\urldef\tempurl%
\url{https://doi.org/10.1017/S0956796807006326}
\showDOI{\tempurl}


\bibitem[Milewski(2018)]%
        {free-monoidal}
\bibfield{author}{\bibinfo{person}{Bartosz Milewski}.}
  \bibinfo{year}{2018}\natexlab{}.
\newblock \bibinfo{title}{Free Monoidal Functors, Categorically!}
\newblock
\newblock
\urldef\tempurl%
\url{https://bartoszmilewski.com/2018/05/16/free-monoidal-functors-categorically/}
\showURL{%
\tempurl}
\newblock
\shownote{Blog post}.


\bibitem[Moggi(1991)]%
        {moggi-monad}
\bibfield{author}{\bibinfo{person}{Eugenio Moggi}.}
  \bibinfo{year}{1991}\natexlab{}.
\newblock \showarticletitle{Notions of Computation and Monads}.
\newblock \bibinfo{journal}{\emph{Inf. Comput.}} \bibinfo{volume}{93},
  \bibinfo{number}{1} (\bibinfo{year}{1991}), \bibinfo{pages}{55--92}.
\newblock
\urldef\tempurl%
\url{https://doi.org/10.1016/0890-5401(91)90052-4}
\showDOI{\tempurl}


\bibitem[Mokhov(2019)]%
        {selective-impl}
\bibfield{author}{\bibinfo{person}{Andrey Mokhov}.}
  \bibinfo{year}{2019}\natexlab{}.
\newblock \bibinfo{title}{Implementation of selective applicative functors in
  Haskell}.
\newblock
\newblock
\urldef\tempurl%
\url{https://hackage.haskell.org/package/selective}
\showURL{%
\tempurl}


\bibitem[Mokhov et~al\mbox{.}(2019)]%
        {selective}
\bibfield{author}{\bibinfo{person}{Andrey Mokhov}, \bibinfo{person}{Georgy
  Lukyanov}, \bibinfo{person}{Simon Marlow}, {and}
  \bibinfo{person}{Jer{\'{e}}mie Dimino}.} \bibinfo{year}{2019}\natexlab{}.
\newblock \showarticletitle{Selective applicative functors}.
\newblock \bibinfo{journal}{\emph{Proc. {ACM} Program. Lang.}}
  \bibinfo{volume}{3}, \bibinfo{number}{{ICFP}} (\bibinfo{year}{2019}),
  \bibinfo{pages}{90:1--90:29}.
\newblock
\urldef\tempurl%
\url{https://doi.org/10.1145/3341694}
\showDOI{\tempurl}


\bibitem[Mokhov et~al\mbox{.}(2020)]%
        {build}
\bibfield{author}{\bibinfo{person}{Andrey Mokhov}, \bibinfo{person}{Neil
  Mitchell}, {and} \bibinfo{person}{Simon~Peyton Jones}.}
  \bibinfo{year}{2020}\natexlab{}.
\newblock \showarticletitle{Build systems {\`{a}} la carte: Theory and
  practice}.
\newblock \bibinfo{journal}{\emph{J. Funct. Program.}}  \bibinfo{volume}{30}
  (\bibinfo{year}{2020}), \bibinfo{pages}{e11}.
\newblock
\urldef\tempurl%
\url{https://doi.org/10.1017/S0956796820000088}
\showDOI{\tempurl}


\bibitem[Nigron and Dagand(2021)]%
        {reaching-star}
\bibfield{author}{\bibinfo{person}{Pierre Nigron} {and}
  \bibinfo{person}{Pierre{-}{\'{E}}variste Dagand}.}
  \bibinfo{year}{2021}\natexlab{}.
\newblock \showarticletitle{Reaching for the Star: Tale of a Monad in Coq}. In
  \bibinfo{booktitle}{\emph{12th International Conference on Interactive
  Theorem Proving, {ITP} 2021, June 29 to July 1, 2021, Rome, Italy (Virtual
  Conference)}} \emph{(\bibinfo{series}{LIPIcs}, Vol.~\bibinfo{volume}{193})},
  \bibfield{editor}{\bibinfo{person}{Liron Cohen} {and} \bibinfo{person}{Cezary
  Kaliszyk}} (Eds.). \bibinfo{publisher}{Schloss Dagstuhl - Leibniz-Zentrum
  f{\"{u}}r Informatik}, \bibinfo{pages}{29:1--29:19}.
\newblock
\urldef\tempurl%
\url{https://doi.org/10.4230/LIPIcs.ITP.2021.29}
\showDOI{\tempurl}


\bibitem[Pai et~al\mbox{.}(1999)]%
        {flash}
\bibfield{author}{\bibinfo{person}{Vivek~S. Pai}, \bibinfo{person}{Peter
  Druschel}, {and} \bibinfo{person}{Willy Zwaenepoel}.}
  \bibinfo{year}{1999}\natexlab{}.
\newblock \showarticletitle{Flash: An efficient and portable Web server}. In
  \bibinfo{booktitle}{\emph{Proceedings of the 1999 {USENIX} Annual Technical
  Conference, June 6-11, 1999, Monterey, California, {USA}}}.
  \bibinfo{publisher}{{USENIX}}, \bibinfo{pages}{199--212}.
\newblock
\urldef\tempurl%
\url{http://www.usenix.org/events/usenix99/full\_papers/pai/pai.pdf}
\showURL{%
\tempurl}


\bibitem[Paraskevopoulou et~al\mbox{.}(2022)]%
        {quickchick-ind}
\bibfield{author}{\bibinfo{person}{Zoe Paraskevopoulou}, \bibinfo{person}{Aaron
  Eline}, {and} \bibinfo{person}{Leonidas Lampropoulos}.}
  \bibinfo{year}{2022}\natexlab{}.
\newblock \showarticletitle{Computing correctly with inductive relations}. In
  \bibinfo{booktitle}{\emph{{PLDI} '22: 43rd {ACM} {SIGPLAN} International
  Conference on Programming Language Design and Implementation, San Diego, CA,
  USA, June 13 - 17, 2022}}, \bibfield{editor}{\bibinfo{person}{Ranjit Jhala}
  {and} \bibinfo{person}{Isil Dillig}} (Eds.). \bibinfo{publisher}{{ACM}},
  \bibinfo{pages}{966--980}.
\newblock
\urldef\tempurl%
\url{https://doi.org/10.1145/3519939.3523707}
\showDOI{\tempurl}


\bibitem[Pir{\'{o}}g and Gibbons(2014)]%
        {resumption}
\bibfield{author}{\bibinfo{person}{Maciej Pir{\'{o}}g} {and}
  \bibinfo{person}{Jeremy Gibbons}.} \bibinfo{year}{2014}\natexlab{}.
\newblock \showarticletitle{The Coinductive Resumption Monad}. In
  \bibinfo{booktitle}{\emph{Proceedings of the 30th Conference on the
  Mathematical Foundations of Programming Semantics, {MFPS} 2014, Ithaca, NY,
  USA, June 12-15, 2014}} \emph{(\bibinfo{series}{Electronic Notes in
  Theoretical Computer Science}, Vol.~\bibinfo{volume}{308})},
  \bibfield{editor}{\bibinfo{person}{Bart Jacobs}, \bibinfo{person}{Alexandra
  Silva}, {and} \bibinfo{person}{Sam Staton}} (Eds.).
  \bibinfo{publisher}{Elsevier}, \bibinfo{pages}{273--288}.
\newblock
\urldef\tempurl%
\url{https://doi.org/10.1016/j.entcs.2014.10.015}
\showDOI{\tempurl}


\bibitem[Prinz et~al\mbox{.}(2022)]%
        {deeper-shallow}
\bibfield{author}{\bibinfo{person}{Jacob Prinz}, \bibinfo{person}{G.~A.
  Kavvos}, {and} \bibinfo{person}{Leonidas Lampropoulos}.}
  \bibinfo{year}{2022}\natexlab{}.
\newblock \showarticletitle{Deeper Shallow Embeddings}. In
  \bibinfo{booktitle}{\emph{13th International Conference on Interactive
  Theorem Proving, {ITP} 2022, August 7 to August 10, 2022, Haifa, Israel}}
  \emph{(\bibinfo{series}{LIPIcs})}. \bibinfo{publisher}{Schloss Dagstuhl -
  Leibniz-Zentrum f{\"{u}}r Informatik}, \bibinfo{pages}{19:1--19:18}.
\newblock
\urldef\tempurl%
\url{https://doi.org/10.4230/LIPIcs.ITP.2022.19}
\showDOI{\tempurl}


\bibitem[Silver and Zdancewic(2021)]%
        {dijkstra-forever}
\bibfield{author}{\bibinfo{person}{Lucas Silver} {and} \bibinfo{person}{Steve
  Zdancewic}.} \bibinfo{year}{2021}\natexlab{}.
\newblock \showarticletitle{Dijkstra monads forever: termination-sensitive
  specifications for interaction trees}.
\newblock \bibinfo{journal}{\emph{Proc. {ACM} Program. Lang.}}
  \bibinfo{volume}{5}, \bibinfo{number}{{POPL}} (\bibinfo{year}{2021}),
  \bibinfo{pages}{1--28}.
\newblock
\urldef\tempurl%
\url{https://doi.org/10.1145/3434307}
\showDOI{\tempurl}


\bibitem[Sozeau and Mangin(2019)]%
        {coq-equations}
\bibfield{author}{\bibinfo{person}{Matthieu Sozeau} {and}
  \bibinfo{person}{Cyprien Mangin}.} \bibinfo{year}{2019}\natexlab{}.
\newblock \showarticletitle{Equations reloaded: high-level dependently-typed
  functional programming and proving in Coq}.
\newblock \bibinfo{journal}{\emph{Proc. {ACM} Program. Lang.}}
  \bibinfo{volume}{3}, \bibinfo{number}{{ICFP}} (\bibinfo{year}{2019}),
  \bibinfo{pages}{86:1--86:29}.
\newblock
\urldef\tempurl%
\url{https://doi.org/10.1145/3341690}
\showDOI{\tempurl}


\bibitem[Spector{-}Zabusky et~al\mbox{.}(2018)]%
        {hs-to-coq}
\bibfield{author}{\bibinfo{person}{Antal Spector{-}Zabusky},
  \bibinfo{person}{Joachim Breitner}, \bibinfo{person}{Christine Rizkallah},
  {and} \bibinfo{person}{Stephanie Weirich}.} \bibinfo{year}{2018}\natexlab{}.
\newblock \showarticletitle{Total {Haskell} is reasonable {Coq}}. In
  \bibinfo{booktitle}{\emph{Proceedings of the 7th {ACM} {SIGPLAN}
  International Conference on Certified Programs and Proofs, {CPP} 2018, Los
  Angeles, CA, USA, January 8-9, 2018}},
  \bibfield{editor}{\bibinfo{person}{June Andronick} {and}
  \bibinfo{person}{Amy~P. Felty}} (Eds.). \bibinfo{publisher}{{ACM}},
  \bibinfo{pages}{14--27}.
\newblock
\urldef\tempurl%
\url{https://doi.org/10.1145/3167092}
\showDOI{\tempurl}


\bibitem[Svenningsson and Axelsson(2012)]%
        {deep-and-shallow}
\bibfield{author}{\bibinfo{person}{Josef Svenningsson} {and}
  \bibinfo{person}{Emil Axelsson}.} \bibinfo{year}{2012}\natexlab{}.
\newblock \showarticletitle{Combining Deep and Shallow Embedding for {EDSL}}.
  In \bibinfo{booktitle}{\emph{Trends in Functional Programming - 13th
  International Symposium, {TFP} 2012, St. Andrews, UK, June 12-14, 2012,
  Revised Selected Papers}} \emph{(\bibinfo{series}{Lecture Notes in Computer
  Science}, Vol.~\bibinfo{volume}{7829})},
  \bibfield{editor}{\bibinfo{person}{Hans{-}Wolfgang Loidl} {and}
  \bibinfo{person}{Ricardo Pe{\~{n}}a}} (Eds.). \bibinfo{publisher}{Springer},
  \bibinfo{pages}{21--36}.
\newblock
\urldef\tempurl%
\url{https://doi.org/10.1007/978-3-642-40447-4\_2}
\showDOI{\tempurl}


\bibitem[Swamy et~al\mbox{.}(2020)]%
        {steelcore}
\bibfield{author}{\bibinfo{person}{Nikhil Swamy}, \bibinfo{person}{Aseem
  Rastogi}, \bibinfo{person}{Aymeric Fromherz}, \bibinfo{person}{Denis
  Merigoux}, \bibinfo{person}{Danel Ahman}, {and} \bibinfo{person}{Guido
  Mart{\'{\i}}nez}.} \bibinfo{year}{2020}\natexlab{}.
\newblock \showarticletitle{SteelCore: an extensible concurrent separation
  logic for effectful dependently typed programs}.
\newblock \bibinfo{journal}{\emph{Proc. {ACM} Program. Lang.}}
  \bibinfo{volume}{4}, \bibinfo{number}{{ICFP}} (\bibinfo{year}{2020}),
  \bibinfo{pages}{121:1--121:30}.
\newblock
\urldef\tempurl%
\url{https://doi.org/10.1145/3409003}
\showDOI{\tempurl}


\bibitem[Swamy et~al\mbox{.}(2013)]%
        {dijkstra-monad}
\bibfield{author}{\bibinfo{person}{Nikhil Swamy}, \bibinfo{person}{Joel
  Weinberger}, \bibinfo{person}{Cole Schlesinger}, \bibinfo{person}{Juan Chen},
  {and} \bibinfo{person}{Benjamin Livshits}.} \bibinfo{year}{2013}\natexlab{}.
\newblock \showarticletitle{Verifying higher-order programs with the dijkstra
  monad}. In \bibinfo{booktitle}{\emph{{ACM} {SIGPLAN} Conference on
  Programming Language Design and Implementation, {PLDI} '13, Seattle, WA, USA,
  June 16-19, 2013}}, \bibfield{editor}{\bibinfo{person}{Hans{-}Juergen Boehm}
  {and} \bibinfo{person}{Cormac Flanagan}} (Eds.). \bibinfo{publisher}{{ACM}},
  \bibinfo{pages}{387--398}.
\newblock
\urldef\tempurl%
\url{https://doi.org/10.1145/2491956.2491978}
\showDOI{\tempurl}


\bibitem[Swierstra and Duponcheel(1996)]%
        {non-monadic-parser}
\bibfield{author}{\bibinfo{person}{S.~Doaitse Swierstra} {and}
  \bibinfo{person}{Luc Duponcheel}.} \bibinfo{year}{1996}\natexlab{}.
\newblock \showarticletitle{Deterministic, Error-Correcting Combinator
  Parsers}. In \bibinfo{booktitle}{\emph{Advanced Functional Programming,
  Second International School, Olympia, WA, USA, August 26-30, 1996, Tutorial
  Text}} \emph{(\bibinfo{series}{Lecture Notes in Computer Science},
  Vol.~\bibinfo{volume}{1129})}, \bibfield{editor}{\bibinfo{person}{John
  Launchbury}, \bibinfo{person}{Erik Meijer}, {and} \bibinfo{person}{Tim
  Sheard}} (Eds.). \bibinfo{publisher}{Springer}, \bibinfo{pages}{184--207}.
\newblock
\urldef\tempurl%
\url{https://doi.org/10.1007/3-540-61628-4\_7}
\showDOI{\tempurl}


\bibitem[Swierstra and Baanen(2019)]%
        {pt-semantics}
\bibfield{author}{\bibinfo{person}{Wouter Swierstra} {and} \bibinfo{person}{Tim
  Baanen}.} \bibinfo{year}{2019}\natexlab{}.
\newblock \showarticletitle{A predicate transformer semantics for effects
  (functional pearl)}.
\newblock \bibinfo{journal}{\emph{Proc. {ACM} Program. Lang.}}
  \bibinfo{volume}{3}, \bibinfo{number}{{ICFP}} (\bibinfo{year}{2019}),
  \bibinfo{pages}{103:1--103:26}.
\newblock
\urldef\tempurl%
\url{https://doi.org/10.1145/3341707}
\showDOI{\tempurl}


\bibitem[Wadler(1990)]%
        {lfix}
\bibfield{author}{\bibinfo{person}{Philip Wadler}.}
  \bibinfo{year}{1990}\natexlab{}.
\newblock \bibinfo{title}{Recursive types for free!}
\newblock
\newblock
\urldef\tempurl%
\url{http://homepages.inf.ed.ac.uk/wadler/papers/free-rectypes/free-rectypes.txt}
\showURL{%
\tempurl}
\newblock
\shownote{Draft}.


\bibitem[Wadler(1992)]%
        {wadler-monad}
\bibfield{author}{\bibinfo{person}{Philip Wadler}.}
  \bibinfo{year}{1992}\natexlab{}.
\newblock \showarticletitle{Comprehending Monads}.
\newblock \bibinfo{journal}{\emph{Math. Struct. Comput. Sci.}}
  \bibinfo{volume}{2}, \bibinfo{number}{4} (\bibinfo{year}{1992}),
  \bibinfo{pages}{461--493}.
\newblock
\urldef\tempurl%
\url{https://doi.org/10.1017/S0960129500001560}
\showDOI{\tempurl}


\bibitem[Wadler(1998)]%
        {expression}
\bibfield{author}{\bibinfo{person}{Philip Wadler}.}
  \bibinfo{year}{1998}\natexlab{}.
\newblock \bibinfo{title}{The Expression Problem}.
\newblock
\newblock
\urldef\tempurl%
\url{http://homepages.inf.ed.ac.uk/wadler/papers/expression/expression.txt}
\showURL{%
\tempurl}
\newblock
\shownote{Email correspondence}.


\bibitem[Willis et~al\mbox{.}(2020)]%
        {selective-parser}
\bibfield{author}{\bibinfo{person}{Jamie Willis}, \bibinfo{person}{Nicolas Wu},
  {and} \bibinfo{person}{Matthew Pickering}.} \bibinfo{year}{2020}\natexlab{}.
\newblock \showarticletitle{Staged selective parser combinators}.
\newblock \bibinfo{journal}{\emph{Proc. {ACM} Program. Lang.}}
  \bibinfo{volume}{4}, \bibinfo{number}{{ICFP}} (\bibinfo{year}{2020}),
  \bibinfo{pages}{120:1--120:30}.
\newblock
\urldef\tempurl%
\url{https://doi.org/10.1145/3409002}
\showDOI{\tempurl}


\bibitem[Xia(2019)]%
        {free-ap-coq}
\bibfield{author}{\bibinfo{person}{Li{-}yao Xia}.}
  \bibinfo{year}{2019}\natexlab{}.
\newblock \bibinfo{title}{Free applicative functors in {Coq}}.
\newblock
\newblock
\urldef\tempurl%
\url{https://blog.poisson.chat/posts/2019-07-14-free-applicative-functors.html}
\showURL{%
\tempurl}
\newblock
\shownote{Blog post}.


\bibitem[Xia et~al\mbox{.}(2020)]%
        {itree}
\bibfield{author}{\bibinfo{person}{Li{-}yao Xia}, \bibinfo{person}{Yannick
  Zakowski}, \bibinfo{person}{Paul He}, \bibinfo{person}{Chung{-}Kil Hur},
  \bibinfo{person}{Gregory Malecha}, \bibinfo{person}{Benjamin~C. Pierce},
  {and} \bibinfo{person}{Steve Zdancewic}.} \bibinfo{year}{2020}\natexlab{}.
\newblock \showarticletitle{Interaction trees: representing recursive and
  impure programs in Coq}.
\newblock \bibinfo{journal}{\emph{Proc. {ACM} Program. Lang.}}
  \bibinfo{volume}{4}, \bibinfo{number}{{POPL}} (\bibinfo{year}{2020}),
  \bibinfo{pages}{51:1--51:32}.
\newblock
\urldef\tempurl%
\url{https://doi.org/10.1145/3371119}
\showDOI{\tempurl}


\bibitem[Ye et~al\mbox{.}(2022)]%
        {robochart-itree}
\bibfield{author}{\bibinfo{person}{Kangfeng Ye}, \bibinfo{person}{Simon
  Foster}, {and} \bibinfo{person}{Jim Woodcock}.}
  \bibinfo{year}{2022}\natexlab{}.
\newblock \showarticletitle{Formally Verified Animation for {RoboChart} using
  Interaction Trees}. In \bibinfo{booktitle}{\emph{The 23rd International
  Conference on Formal Engineering Methods}}. \bibinfo{publisher}{Springer
  Science and Business Media Deutschland GmbH}.
\newblock
\urldef\tempurl%
\url{https://eprints.whiterose.ac.uk/188566/}
\showURL{%
\tempurl}


\bibitem[Yoon et~al\mbox{.}(2022)]%
        {itree-layerd}
\bibfield{author}{\bibinfo{person}{Irene Yoon}, \bibinfo{person}{Yannick
  Zakowski}, {and} \bibinfo{person}{Steve Zdancewic}.}
  \bibinfo{year}{2022}\natexlab{}.
\newblock \showarticletitle{Formal reasoning about layered monadic
  interpreters}.
\newblock \bibinfo{journal}{\emph{Proc. {ACM} Program. Lang.}}
  \bibinfo{volume}{6}, \bibinfo{number}{{ICFP}} (\bibinfo{year}{2022}).
\newblock
\urldef\tempurl%
\url{https://doi.org/10.1145/3547630}
\showDOI{\tempurl}


\bibitem[Zakowski et~al\mbox{.}(2021)]%
        {vellvm-itree}
\bibfield{author}{\bibinfo{person}{Yannick Zakowski}, \bibinfo{person}{Calvin
  Beck}, \bibinfo{person}{Irene Yoon}, \bibinfo{person}{Ilia Zaichuk},
  \bibinfo{person}{Vadim Zaliva}, {and} \bibinfo{person}{Steve Zdancewic}.}
  \bibinfo{year}{2021}\natexlab{}.
\newblock \showarticletitle{Modular, compositional, and executable formal
  semantics for {LLVM} {IR}}.
\newblock \bibinfo{journal}{\emph{Proc. {ACM} Program. Lang.}}
  \bibinfo{volume}{5}, \bibinfo{number}{{ICFP}} (\bibinfo{year}{2021}),
  \bibinfo{pages}{1--30}.
\newblock
\urldef\tempurl%
\url{https://doi.org/10.1145/3473572}
\showDOI{\tempurl}


\bibitem[Zakowski et~al\mbox{.}(2020)]%
        {gpaco}
\bibfield{author}{\bibinfo{person}{Yannick Zakowski}, \bibinfo{person}{Paul
  He}, \bibinfo{person}{Chung{-}Kil Hur}, {and} \bibinfo{person}{Steve
  Zdancewic}.} \bibinfo{year}{2020}\natexlab{}.
\newblock \showarticletitle{An equational theory for weak bisimulation via
  generalized parameterized coinduction}. In
  \bibinfo{booktitle}{\emph{Proceedings of the 9th {ACM} {SIGPLAN}
  International Conference on Certified Programs and Proofs, {CPP} 2020, New
  Orleans, LA, USA, January 20-21, 2020}},
  \bibfield{editor}{\bibinfo{person}{Jasmin Blanchette} {and}
  \bibinfo{person}{Catalin Hritcu}} (Eds.). \bibinfo{publisher}{{ACM}},
  \bibinfo{pages}{71--84}.
\newblock
\urldef\tempurl%
\url{https://doi.org/10.1145/3372885.3373813}
\showDOI{\tempurl}


\bibitem[Zhang et~al\mbox{.}(2021)]%
        {itree-kv-server}
\bibfield{author}{\bibinfo{person}{Hengchu Zhang}, \bibinfo{person}{Wolf
  Honor{\'{e}}}, \bibinfo{person}{Nicolas Koh}, \bibinfo{person}{Yao Li},
  \bibinfo{person}{Yishuai Li}, \bibinfo{person}{Li{-}yao Xia},
  \bibinfo{person}{Lennart Beringer}, \bibinfo{person}{William Mansky},
  \bibinfo{person}{Benjamin~C. Pierce}, {and} \bibinfo{person}{Steve
  Zdancewic}.} \bibinfo{year}{2021}\natexlab{}.
\newblock \showarticletitle{Verifying an {HTTP} Key-Value Server with
  Interaction Trees and {VST}}. In \bibinfo{booktitle}{\emph{12th International
  Conference on Interactive Theorem Proving, {ITP} 2021, June 29 to July 1,
  2021, Rome, Italy (Virtual Conference)}} \emph{(\bibinfo{series}{LIPIcs},
  Vol.~\bibinfo{volume}{193})}, \bibfield{editor}{\bibinfo{person}{Liron Cohen}
  {and} \bibinfo{person}{Cezary Kaliszyk}} (Eds.). \bibinfo{publisher}{Schloss
  Dagstuhl - Leibniz-Zentrum f{\"{u}}r Informatik},
  \bibinfo{pages}{32:1--32:19}.
\newblock
\urldef\tempurl%
\url{https://doi.org/10.4230/LIPIcs.ITP.2021.32}
\showDOI{\tempurl}


\end{thebibliography}

\end{document}